\newcommand{\ds}{\displaystyle}
 \title{Boltzmann Samplers for $v$-balanced Colored Necklaces}
 \author{O. Bodini \and A. Jacquot }
\institute{
 LIP6, UMR 7606, Departement CALSCI,
 Universit\'e Paris 6 - UPMC,\\
 104, avenue du pr\'esident Kennedy,\\ 
 F-75252 Paris cedex 05, France
 }
\begin{document}

\maketitle

 \begin{abstract}
 This paper is devoted to the random generation of particular colored necklaces for which the number of beads of a given color is constrained (these necklaces are called \emph{$v$-balanced}). We propose an efficient sampler (its expected time complexity is linear) which satisfies the Boltzmann model principle introduced by Duchon, Flajolet, Louchard and Schaeffer \cite{DFLS}. Our main motivation is to show that the absence of a decomposable specification can be circumvented by mixing the Boltzmann samplers with other types of samplers.\\
%\noindent {\bf R\'esum\'e.}
%Ce papier s'articule autour de la g\'en\'eration al\'eatoire de certains colliers color\'es (les colliers ayant une contrainte sur le nombre de perles de chaque couleur. Cette contrainte est appel\'ee condition de $v$-\'equilibre). Nous proposons un g\'en\'erateur efficace (le temps moyen d'execution \'etant lin\'eaire) qui repose sur le mod\`ele de Boltzmann introduit par Duchon, Flajolet, Louchard and Schaeffer \cite{DFLS}. Notre motivation principale est de montrer que l'absence de sp\'ecification d\'ecomposable peut \^etre contourn\'ee par l'utilisation auxiliaire d'autres types de g\'en\'erateurs.
 \end{abstract}
%\begin{abstract}
%  \begin{quotation}
%    This is revision {\rcsMaj.\rcsMin} of this document.
%  \end{quotation} 
%  This abstract should be some brief description of
%  your work and how it advances our science. For journal submissions,
%  this is the same text that you fill in the oline form.
%  \begin{description}
%  \item[Don't] put complicated mathematical expressions here.
%  \item[Don't] use commands you defined yourself here.
%  \item[Don't] put \verb!\cite! commands here. If you feel that a
%    reference is important for the overall estimation of your work,
%    spell that reference out by using something like 
%    ``\emph{This result has been conjectured by Erd\H{o}s et al. (1931)}''.
%  \end{description}
%\end{abstract}

%opening

% %\begin{document}
% \title{Boltzmann Samplers for $v$-balanced colored Cycles}
% \author{O. Bodini, A. Jacquot}
% \maketitle
% \begin{center}
% \address{
% LIP6, UMR 7606, Departement CALSCI,
% Universit\'e Paris 6 - UPMC,\\
% 104, avenue du pr\'esident Kennedy,\\ 
% F-75252 Paris cedex 05, France\\
% }
% 
% \email{\{Olivier.Bodini, Alice.Jacquot\}@lip6.fr}
% \end{center}
% \begin{abstract}
% In this paper, 
% \end{abstract}

\section*{Introduction}
Necklaces are classical objects in combinatorics \cite{FS,CL,FK,FM}. They naturally occur in the study of Lyndon words or in many other enumeration problems \cite{CRMSS}. For $v=(v_1,...,v_k)$ a $k$-tuple of positive integers, our interest lies in uniformly drawing $v$-balanced necklaces of $n$ beads : the beads can take $k$ distinct colors and the number $n_i$ of  beads of color $i$ verifies the \textit{$v$-balance} (which we define as meaning that $(n_1,...,n_k)$ is collinear to $v$). An additional reason to focus on $v$-balanced structures derives from the intrinsic difficulty to enumerate and describe such objects in terms of analytic combinatorics. In particular, the generating function of $v$-balanced cycles is neither holonomic nor closed-formed.
We attempt to draw very large necklaces in order to conjecture some limit properties on them. To that purpose, we adopt the framework of Boltzmann samplers. This approach, like the recursive method \cite{NW}, allows to ``automatically'' build a sampler from a decomposable specification of a combinatorial class. A Boltzmann sampler does not guarantee the size of the generated object, only that the drawn object has the same probability to be drawn as any other of the same size. More details on this point can be found in the preliminary section. But we can already state that for a large number of combinatorial classes this relaxation allows to generate an object of size $n$ in expected linear time $O(n)$ without preprocessing. 

The main problem is that $v$-balanced necklaces do not admit decomposable s using traditional builders $(+,\times,Seq,...)$, except for some very special cases. The aim of this paper is to adapt an ad-hoc sampler for $v$-balanced sequences, in such a way that the sampling follows a Boltzmann model. After that, we will use this Boltzmann sampler (which cannot be built from a decomposable specification) to obtain a Boltzmann sampler for $v$-balanced cycles. To this end, we need to know how to obtain a Boltzmann sampler for a class $\mathcal{A}$ from a Boltzmann sampler for its pointing class $\Theta\mathcal{A}$.

\smallskip
This paper is organized in four sections.
The first section defines the notations and concepts used throughout the article. After reflecting upon basic notions related to combinatorial classes and Boltzmann samplers, we define our classes of interest, namely, the $v$-balanced sequences and the $v$-balanced cycles.\\
%\TODO{d'ou vient le terme monogene ?}
The second section addresses the sampler for $(1,1)$-balanced cycles. We show two ways of building such a sampler. The first one uses a natural isomorphism with a decomposable combinatorial class involving Dyck paths. Unfortunately, this approach cannot be generalized, therefore we propose a second one based on an isomorphism between pointed $(1,1)$-balanced cycles and the weighted sum of $(1,1)$-balanced sequences. These sequences can be drawn with an ad-hoc sampler.\\
In the third section, a generalization of the previous approach to $v$-balanced cycles is given, yielding an efficient Bolzmann sampler which, for our object of study, bypasses the need for a specification of the class.\\
The fourth section concludes with some perspective works.

\section{Preliminaries}
An \textit{(unlabelled) combinatorial class} is a couple $\langle \mathcal{C},s\rangle$ (generally abbreviated by  $\mathcal{C}$)  where $\mathcal{C}$ is a set of \textit{objects} and $s$ is a function on $\mathcal{C}$ called \textit{size function} which satisfies the conditions :\\
(i) $\forall a\in \mathcal{C}$, $s(a)$ is a non-negative integer;\\
(ii) the number of objects of any given size is finite.\\
We naturally associate to a combinatorial class $\mathcal{C}$ the ordinary generating function $C(x)=\sum c_nx^n$ where $c_n$ is the number of objects of size $n$ in $\mathcal{C}$.
A \textit{Boltzmann sampler} for an unlabelled combinatorial class $\mathcal{C}$ is a random generator such that the probability to draw an object $a$ of size $n$ is $P_x(a)=\dfrac{x^n}{C(x)}$. These samplers were first introduced in \cite{DFLS} and extended, (in particular the sampler for unlabelled cycles) in \cite{FFP}. Let us notice that a Boltzmann sampler depends on a parameter $x$ which can be tuned to focus on an expected output size. More precisely, let $N$ be the random variable of the size of the output, we can solve the equation $\mathbb{E}(N)=x\cdot\dfrac{C'(x)}{C(x)}$ to center the output distribution on an expected value. 

Let us recall in the following table the samplers for the unlabelled operators that we need. 
%Let  $\mathcal{A}$, $\mathcal{B}$ and $\mathcal{C}$ be combinatorial classes with $A(x)$, $B(x)$ and $C(x)$ their respective generating functions. Moreover, let us consider that $\mathcal{C}$ is without neutral object, we have :

\bigskip
%REPRENDRE DANS UN TABLEAU
\begin{figure}[htbp]
\begin{center}

\noindent
\begin{tabular}{|l|l|}
%\cline{2-4}
\hline
Sampler& Description\\
\hline
\hline
$\Gamma_x(\mathcal{Z})$ &\textbf{Return} $\mathcal{Z}$. \qquad $\mathcal{Z}$ denotes an atomic class.\\

\hline
%Let $\mathcal{A}$, $\mathcal{B}$ be two combinatorial classes.
$\Gamma_x (\mathcal{A}\mathcal{B})$& \textbf{Return} ($\Gamma_x\mathcal{A}, \Gamma_x\mathcal{B}$).\\
\hline
%$\Gamma_x (\mathcal{A}+\mathcal{B})$
& If $Bernoulli(\frac{A(x)}{A(x)+B(x)})=1$\\
$\Gamma_x (\mathcal{A}+\mathcal{B})$&then \textbf{Return} $\Gamma_x\mathcal{A}$\\
&else \textbf{Return} $\Gamma_x\mathcal{B}$\\
\hline
$\Gamma_x Seq(\mathcal{C})$&Draw $l$ according to a geometric law of parameter $C(x)$.\\
&\textbf{Return} the concatenation of $l$ calls to $\Gamma_x\mathcal{A}$.\\
\hline
$\Gamma_x Rep_n(\mathcal{C}) $& \textbf{Return} $\underbrace{aa...a}_{n \mathrm{\;times}}$ with $a$ generated by $\Gamma_{x^n} \mathcal{A}$.\\
\hline
&Let $K$ be a random variable in $\mathbb{N}^*$ verifying :\\
%$\Gamma_x Cyc(\mathcal{C})$
&$\mathbb{P}(K=k)=\frac{-\varphi(k)}{kCycA(x)}\log(1-A(x^k))$ with $CycA(x)$ the generating function of~$Cyc(\mathcal{A})$\\
$\Gamma_x Cyc(\mathcal{C})$&Draw $k$ according to the law of $K$.\\ &Draw $j$ according to a logarithmic law of parameter $A(x^k)$.\\
&Let $M$ by the concatenation of $j$ calls to $\Gamma_{x^k}\mathcal{A}$ \\
&\textbf{Return}  $\underbrace{MM...M}_{k \mathrm{\;times}}$.\\
\hline
\end{tabular}

\end{center}
\caption{Some classical samplers with  $\mathcal{A}$, $\mathcal{B}$ and $\mathcal{C}$ combinatorial classes, $A(x)$, $B(x)$ and $C(x)$ their respective generating functions.  $\mathcal{C}$ must not have neutral objects.}
\end{figure}

\bigskip
Boltzmann samplers are very powerful tools to efficiently generate combinatorial structures \cite{BFP,DS}. %kkk 
In particular, It can be possible to automatically build a sampler according to the specification of a combinatorial class. Our aim is to consider $Cyc_v$ and
$Seq_v$ defined below, as additional basic classes, to be added to the collection of classical constructions thus increasing the expressivity of the Boltzmann model. 

\begin{definition}
 We denote by $Seq_{v}$ (resp. $Cyc_{v}$) a sequence of atoms (objects of size 1), such that each atom can be colored by one of the color in $\{1,...,k\}$ and the numbers $n_i$ of beads of color $i$ verifies the $v$-balanced condition. That is to say $(n_1,...,n_k)$ is collinear to $v$.
\end{definition}

 From an easy observation, it can be seen that the generating function of $Seq_{(v_1,...,v_k)}$ is $\displaystyle\sum (\dfrac{(n|v|)!}{\prod((v_in)!)})x^{n|v|}$  where $|v|~=~\sum v_i$. The following proposition is a trivial consequence.

\begin{proposition}[see e.g \cite{BH}]
 The class $Seq_{v}$ is holonomic for every $v$.
\end{proposition}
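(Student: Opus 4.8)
The plan is to use the classical equivalence (see e.g.\ \cite{BH}, or Stanley's theory of differentiably finite series): a univariate power series is holonomic if and only if its coefficient sequence is \emph{P-recursive}, i.e.\ satisfies a linear recurrence with polynomial coefficients. So it is enough to exhibit such a recurrence for the coefficients of the generating function $S_v(x)=\sum_{n\ge 0}a_n\,x^{n|v|}$ of $Seq_v$, where $a_n=\dfrac{(n|v|)!}{\prod_{i=1}^k (v_in)!}$, which is exactly the expression recalled just above.

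First I would treat the auxiliary series $g(y)=\sum_{n\ge 0}a_n y^n$ and show it is holonomic by a one-line ratio computation. Using $\dfrac{(N+\ell)!}{N!}=\prod_{j=1}^{\ell}(N+j)$ one gets
\[
\frac{a_{n+1}}{a_n}=\frac{\bigl((n+1)|v|\bigr)!}{(n|v|)!}\cdot\prod_{i=1}^k\frac{(v_in)!}{\bigl(v_i(n+1)\bigr)!}=\frac{\prod_{j=1}^{|v|}\bigl(n|v|+j\bigr)}{\prod_{i=1}^k\prod_{j=1}^{v_i}\bigl(v_in+j\bigr)},
\]
which is a rational function of $n$. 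Clearing the denominator, $(a_n)$ satisfies a first-order recurrence $Q(n)\,a_{n+1}=P(n)\,a_n$ with $P,Q$ polynomial; equivalently $g$ is a generalized hypergeometric series, and such series are holonomic.

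It then remains to transfer holonomy from $g$ to $S_v$. Since $S_v(x)=g(x^{|v|})$ is the composition of the holonomic series $g$ with the polynomial $x\mapsto x^{|v|}$, and the holonomic class is stable under composition with polynomials (more generally with algebraic functions), $S_v$ is holonomic. Concretely one can either push the linear ODE satisfied by $g$ through the chain rule and clear denominators, or observe directly that the sequence $\bigl([x^N]S_v(x)\bigr)_N$, equal to $a_{N/|v|}$ when $|v|\mid N$ and to $0$ otherwise, inherits a polynomial recurrence of order $|v|$ from the recurrence for $(a_n)$ (for $N$ not a multiple of $|v|$ both sides vanish, so the same relation holds identically in $N$). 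Either way the statement follows for every $v$.

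There is no genuine obstacle here — the proposition is, as advertised, a trivial consequence of the shape of the generating function; the only thing worth a moment's care is the last transfer step. If one prefers to bypass closure properties altogether, the cleanest route is simply to recognize $S_v$ as a generalized hypergeometric function in the variable $x^{|v|}$ and invoke the well-known holonomy of ${}_pF_q$, which I expect is the intended reading of \cite{BH}.
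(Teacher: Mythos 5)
Your argument is correct and is essentially the paper's: the ratio computation showing $a_{n+1}/a_n$ is rational in $n$ is precisely the verification that $S_v$ is a generalized hypergeometric series in $x^{|v|}$, which the paper states directly by writing out the $_kF_{k-1}$ parameters and then invoking the holonomy of hypergeometric functions. You are merely more explicit about the final substitution step $x\mapsto x^{|v|}$, which the paper leaves implicit.
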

\begin{proof} Indeed, a quick calculation shows that this can be written as the hypergeometric function : $$_kF_{k-1}(\frac{1}{|v|},\frac{2}{|v|},...,\frac{|v|}{|v|};\frac{1}{v_1},\frac{2}{v_1},...,\frac{v_1}{v_1},\frac{1}{v_2},...,\frac{v_2}{v_2},...,\frac{1}{v_k},\frac{2}{v_k},...,\frac{v_k-1}{v_k};\frac{|v|^{|v|}}{\prod v_i^{v_i}}\times x^{|v|}).$$ The hypergeometric functions are holonomic by definition.
\end{proof}

\noindent\textit{Remark:}
 Nevertheless, the class $Seq_{v}$ is algebraic in only very few cases. For instance, $Seq_{(1,1,1)}$ is not algebraic. See \cite{BH} for a complete classification of the algebraic cases.

%Let $\vec{v}$ be a vector $(k_1, k_2,.. k_m)$. \\
%A $\vec{v}$-balanced sequence (resp. cycle) is a sequence (resp.cycle) of $m$ distinguish atoms $X_i$ such as, for a size $s=p(\sum\limits_{i=1}^{m} k_i)$, their is, $\forall i$, $pk_i$ atoms of type $X_i$.

%As we use specific samplers or a specification, we are not allowed to draw upper level balanced cycle. For example, we can't draw $E = Cyc_{\vec{v}}(XY + E)$. But we can use $Cyc_{\vec{v}}$ in other constructors like $Cyc(Cyc_{\vec{v}})$, since we are going to draw $Cyc_{\vec{v}}$ under Boltzmann model.\\
\section{Generating (1,1)-balanced cycles}

This section is dedicated to the random generation of $(1,1)$-balanced cycles. This is a good example to illustrate our approach. 
A $(1,1)$-balanced sequence (resp. cycle) is by previous definition a sequence (resp. cycle) of black atoms $\mathcal{Z}_b$ and white atoms $\mathcal{Z}_w$ such that the number of black atoms is equal to the number of white atoms. 
The generating function for $Seq_{(1,1)}$ is $S_{(1,1)}= \,_1F_0(1/2,4x^2)=(1-4x^2)^{-\frac{1}{2}}.$ In particular, $Seq_{(1,1)}$ is algebraic. We are going to use this important property in the following first approach.
\subsection{First approach: through a decomposable specification}

%\textcolor{red}{ Changements}

Both $Seq_{(1,1)}$ and $Cyc_{(1,1)}$ are specifiable. We can thus apply the unlabelled samplers described in the previous table.
More precisely, $Seq_{(1,1)}$ is a classical combinatorial notion, names \textit{bridges}.
To generate $Cyc_{(1,1)}$, we use an isomorphism between the  $(1,1)$-balanced cycles and the cycles of indecomposable Dyck paths. 
%\bigskip

Let $\mathcal{D}$ be the class  of Dyck paths of specification $\mathcal{D}=Seq(\nearrow \mathcal{D}\searrow)$. Dyck paths are excursions from $(0,0)$ to $(0,2n)$ over the discrete lattice $\mathbb{Z}^+\times\mathbb{Z}^+$, with displacements of $(1,1)$ and $(1,-1)$. This can also be viewed as the class of well-formed parentheses strings.
Bridges are defined by  $Seq((\nearrow \mathcal{D}\searrow)+(\searrow \overline{\mathcal{D}}\nearrow))$, where $\overline{\mathcal{D}}$ is like $\mathcal{D}$, but with the roles of $\searrow$ and $\nearrow$ interchanged. So, we can generate $Seq_{(1,1)}$ by classical Boltzmann sampler principles. 
%
%\bigskip

The class of \textit{indecomposable} Dyck paths is the class of specification $\nearrow \mathcal{D}\searrow$.

\begin{proposition}
[Raney's lemma] The balanced cycle can also be specified as $Cyc_{(1,1)}\simeq Cyc(\nearrow \mathcal{D}\searrow)$, where $Cyc$ is the classical constructor for cycles.\\
\end{proposition}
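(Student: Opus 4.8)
The statement to prove is an isomorphism $Cyc_{(1,1)} \simeq Cyc(\nearrow\mathcal{D}\searrow)$. So I need to set up a size-preserving bijection between $(1,1)$-balanced cycles (necklaces over $\{\mathcal{Z}_b,\mathcal{Z}_w\}$ with equally many beads of each color, up to rotation) and cycles of indecomposable Dyck paths, i.e. sequences of primitive Dyck paths taken up to cyclic rotation of the sequence.

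The core idea: given a $(1,1)$-balanced *word* $w$ (a linear arrangement), reading black as $\nearrow$ and white as $\searrow$ produces a bridge — a lattice path from $(0,0)$ to $(2n,0)$ with steps $\pm 1$. By the cycle lemma of Dvoretzky and Motzkin, among the $2n$ cyclic rotations of $w$, exactly ... well, not exactly one — but the key classical fact (Raney, in the "$0$-excursion" form) is that a closed bridge of $2n$ steps has a distinguished decomposition: a nonempty bridge either starts with an up-step and factors as $\nearrow\mathcal{D}\searrow$ followed by a bridge, or starts with a down-step; cyclically, a bridge of weight $0$ has a canonical rotation into a concatenation of "arches" $\nearrow\mathcal{D}\searrow$ and inverted arches $\searrow\overline{\mathcal{D}}\nearrow$. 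The cleaner route for this proposition is: a $(1,1)$-balanced *cycle* corresponds to a necklace that, cut appropriately, is a concatenation of blocks each of which is an indecomposable Dyck path (arch) — and this is exactly the content of Raney's lemma applied to the "first return to the minimum" decomposition read cyclically.

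**Steps, in order.** First I would recall Raney's cycle lemma precisely: for a sequence of integers summing to $+1$, exactly one of its cyclic rotations has all partial sums positive; more relevantly here, a sequence of $\pm 1$'s summing to $0$ can be cyclically rotated so that every *strict prefix* read from a chosen starting bead keeps the running sum $\ge 0$ and returns to $0$ — and this starting point corresponds to reading the necklace starting just after a global minimum of the associated cyclic walk. Second, I would argue the decomposition is well-defined on necklaces: rotating the *word* by one bead rotates the *bridge* and shifts the running minimum, but the *set* of positions achieving the minimum is rotation-covariant, so the induced "cycle of arches" structure is an invariant of the necklace, not of the chosen linear representative. Third, I would make the map explicit in both directions: from a $(1,1)$-balanced necklace, cut at (say) the first minimum after a canonical reference and read off the maximal factorization into arches $\nearrow\mathcal{D}\searrow$; conversely, from a cycle of indecomposable Dyck paths, concatenate the arches around the circle and recolor $\nearrow\mapsto\mathcal{Z}_b$, $\searrow\mapsto\mathcal{Z}_w$. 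Fourth, I would check these are mutually inverse and size-preserving (an arch of semilength $m$ contributes $2m$ beads, $m$ of each color, so $v$-balance is automatic), and that the two notions of "cycle" (rotation of beads vs. rotation of the block sequence) match up — this is where the minimum-set covariance from step two does the work.

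**The main obstacle.** The delicate point is the handedness/ambiguity issue: a $(1,1)$-balanced walk that returns to $0$ need not stay nonnegative from *any* starting bead if we insist on a single orientation of arches — we might genuinely need both arches $\nearrow\mathcal{D}\searrow$ and inverted arches $\searrow\overline{\mathcal{D}}\nearrow$, which is exactly what happens for *bridges* (the $Seq$ case). For *cycles*, the resolution is that cutting at a global minimum of the cyclic walk forces the walk to be everywhere $\ge 0$ thereafter, eliminating the inverted arches — so the subtlety is proving that this cut is (a) always possible and (b) canonical up to the necklace's own symmetry, and that distinct necklaces give distinct arch-cycles. I expect that verifying the bijection descends correctly to the quotient (beads-up-to-rotation $\leftrightarrow$ arches-up-to-rotation), including the bookkeeping of automorphisms / periodic necklaces, will be the part requiring the most care; everything else is a routine unwinding of the cycle lemma.
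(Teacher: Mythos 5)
Your proposal is correct and follows essentially the same route as the paper's (much terser) sketch: represent the necklace as a closed $\pm1$ walk up to rotation, cut at a point of minimal height to obtain a Dyck path, factor it into indecomposable arches, and observe that the different choices of minimum yield the same cycle of arches up to rotation. Your added care about well-definedness on the quotient and the explicit inverse map only fills in details the paper leaves implicit.
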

%In consequence, Algorithm 2 is a valid Boltzmann sampler for $Cyc_{(1,1)}$.

\begin{proof}(sketch)
We can represent $Cyc_{(1,1)}$ as an excursion from $(0,0)$ to $(0,2n)$, with steps of $(1,1)$ and $(1,-1)$, up to circular permutations. Such an excursion has a non-empty set $S$ of points of minimal abscissa. 
%Note that the set of these points is the same for every circular permutation. 
%Now, we shift the path such as $c$ be the first point. We get a translated Dyck path from $c$ to $c$. When we forget to name the origin, this is a Dyck path.
As we deal with excursions up to circular permutations, we can consider that the excursion begin at any $c$ in $S$ (see Fig.2).
In this case, we have Dyck paths and we remark that they are exactly the same up to circular permutations of their indecomposable Dyck paths.
%There is now an unique way to decompose it into indecomposable Dyck paths, and we consider it up to circular permutation. There is as much indecomposable Dyck paths than minimums we can put at the begining, so we have an isomorphism.
%\textcolor{red}{Fin  Changements}\\

%The following paradigmatic example shows explicitly the combinatorial isomorphism between $Cyc_{(1,1)}$ and $Cyc(\nearrow \mathcal{D}\searrow)$:
%For a move from 0 to 0, with steps of 1 and -1, their is a minimal point M. Now, if we shift the path such as M be the first point. We got a path from M to M with steps of 1 and -1. So this is a translated Dyck path. When we forget to name the origin, this is a Dyck path. 
%This way, a balanced cycle is a cycle of Dyck path since two cycle are equivalent up to a circular shift.
%FAIRE LE DESSIN.
\begin{figure}
  \begin{center}
%\begin{center}
\noindent$Cycl(\circ\:\!\bullet\:\!\bullet\:\!\bullet\:\!\circ\:\!\bullet\:\!\circ\:\circ ) $
$\leftrightarrow$
\begin{tikzpicture}[baseline]
\draw (0,0) -- (0.2,0.2)--(0.4,0)--(0.6,-0.2)--(0.8,-0.4)--(1.0,-0.2)--(1.2,-0.4)--(1.4,-.2)--(1.6,0);
\fill (0,0) circle(0.4mm);
\fill (0.2,0.2) circle(0.2mm);
\fill (0.4,0) circle(0.2mm);
\fill (.6,-0.2) circle(0.2mm);
\fill (1.0,-0.2) circle(0.2mm);
\fill (1.2,-0.4) circle(0.2mm);
\fill (1.4,-.2) circle(0.2mm);
\fill (1.6,0) circle(0.4mm);
\fill (0.8,-0.4) circle(0.5mm);
\draw[very thin,dashed  ] (0,0) -- (1.6,0);
\draw (0,0)[anchor=north] node {a};
\draw (1.6,0)[anchor=north] node {a};
\draw (0.8,-0.4)[anchor=north] node {c};
\end{tikzpicture}
$\equiv$
\begin{tikzpicture}[baseline]
\draw (0,0) -- (0.2,0.2)--(0.4,0)--(0.6,0.2)--(0.8,0.4)--(1,.6)--(1.2,0.4)--(1.4,.2)--(1.6,0);
\fill (0,0) circle(0.5mm);
\fill (0.2,0.2) circle(0.2mm);
\fill (0.4,0) circle(0.2mm);
\fill (0.6,0.2) circle(0.2mm);
\fill (0.8,0.4) circle(0.4mm);
\fill (1,.6) circle(0.2mm);
\fill (1.2,0.4) circle(0.2mm);
\fill (1.4,.2) circle(0.2mm);
\fill (1.6,0) circle(0.5mm);
\draw[very thin,dashed  ] (0,0) -- (1.6,0);
\draw (0,0)[anchor=north] node {c};
\draw (1.6,0)[anchor=north] node {c};
\draw (0.8,0.4)[anchor=north] node {a};
\end{tikzpicture}
$\leftrightarrow$
$Cycl($
\begin{tikzpicture}[baseline]
\draw (0,0) -- (0.2,0.2)--(0.4,0);
\fill (0,0) circle(0.4mm);
\fill (0.2,0.2) circle(0.2mm);
\fill (0.4,0) circle(0.4mm);
\draw[very thin,dashed  ] (0,0) -- (.4,0);
%\draw (0,0)[anchor=north] node {c};
%\draw (0.6,0)[anchor=north] node {c};
%\draw (1.2,0.6)[anchor=north] node {a};
\end{tikzpicture}
$\,$
\begin{tikzpicture}[baseline]
\draw (0.0,0)--(0.2,0.2)--(0.4,0.4)--(.6,.6)--(.8,0.4)--(1,.2)--(1.2,0);
\fill (0.0,0) circle(0.4mm);
\fill (0.2,0.2) circle(0.2mm);
\fill (0.4,0.4) circle(0.2mm);
\fill (0.6,.6) circle(0.2mm);
\fill (0.8,0.4) circle(0.2mm);
\fill (1,.2) circle(0.2mm);
\fill (1.2,0) circle(0.4mm);
\draw[very thin,dashed  ] (0,0) -- (1.2,0);
%\draw (0,0)[anchor=north] node {c};
%\draw (2.4,0)[anchor=north] node {c};
%\draw (1.2,0.6)[anchor=north] node {a};
\end{tikzpicture}$)$
\end{center}
    \caption{Isomorphism between $Cyc_{(1,1)}$ and $Cyc(\nearrow \mathcal{D}\searrow)$.}
%  \label{fig2}
%  \end{center}
\end{figure}
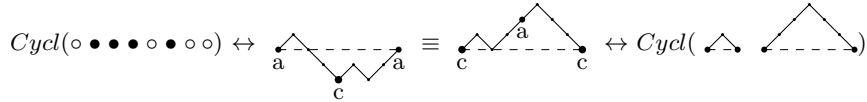

\end{proof}

%\begin{center}
Now, with this specification, we can use usual Boltzmann samplers for unlabelled structures to draw $(1,1)$-balanced cycles. We obtain a Boltzmann sampler for $Cyc(\nearrow \mathcal{D} \searrow)$ by a combination of the previous mentioned samplers according to the specification. In the following algorithms $D(x)$ and $C_{(1,1)}(x)$ respectively represent the generating functions of $\mathcal{D}$ and $Cyc(\circ D \bullet)$.

%DONNER L'ALGORITHME EXACT.
\begin{algorithm}[H]
\caption{$\Gamma_x \mathcal{D}$}

\KwIn{the parameter $x$}

\KwOut{an object of $\mathcal{D}$}
Draw $l$ according to a geometric law of parameter $x^2D(x)$\\
$M := \varepsilon$\\
\For{$i$ \emph{\textbf{from }} $1$ \textbf{\emph{to }} $l$}{
$M := concat(M, \circ, \Gamma_x \mathcal{D}, \bullet$)
}

%$M\leftarrow Concat(M,])$\\
\Return $(M)$.
\end{algorithm}
\begin{algorithm}[H]
\caption{$\Gamma_x Cyc_{(1,1)}$}

\KwIn{the parameter $x$}

\KwOut{a (1,1)-balanced cycle}

Let $K$ be a random variable in $\mathbb{N}^*$ verifying :\\
$\mathbb{P}(K=k)=\frac{-\varphi(k)}{kC_{(1,1)}(x)}\log(1-x^{2k}D(x^k))$\\
% with $C_{(1,1)}(x)$ the generating function of~$Cyc(\circ D \bullet)$\\
Draw $k$ according to the law of $K$.\\
Draw $j$ according to a logarithmic law of parameter $x^{2k}D(x^k)$.\\
$M := \varepsilon$\\
\For{$i$ \emph{\textbf{from }} $1$ \textbf{\emph{to }} $j$}{
$M := concat(M, \circ, \Gamma_x \mathcal{D}, \bullet)$
}
\Return $[\underbrace{MM...M}_{k \mathrm{\;times}}]$.
\end{algorithm}
 
This method is an efficient way to draw $(1,1)$-balanced cycles. In particular, the basic rejection sampler $\mu Cyc_{(1,1)}(x;n,\varepsilon)$ (see \cite{DFLS} for details) has an $O(n)$ overall cost in average. But it relies on a very singular property of the class~: it can be decomposed with usual constructors. We are now interested in another way to generate these objects. This new approach will be extended to all $v$-balanced objects in the last section.
\subsection{Second approach: mixed samplers}

We are still focused on the generation of $(1,1)$-balanced cycles, but now  the use of an algebraic specification is avoided.
The idea of our sampler can be summerized as follows : first, we adapt an ad-hoc sampler for the $(1,1)$-balanced sequences in such a way that this sampling follows a Boltzmann model; second, we show an isomorphism (see proposition 4) between the class $\Theta Cyc_{(1,1)}$ of pointed balanced cycles and a sum involving duplications of $Seq_{(1,1)}$. The notion of pointing classes is recalled in this part; finally, to obtain a Boltzmann sampler for $Cyc_{(1,1)}$, we explain how to obtain a Boltzmann sampler for a class $A$ from a Boltzmann sampler for its pointing class $\Theta A$.
% We deal with this Boltzmann sampler to obtain a Boltzmann sampler for the $(1,1)$-balanced cycles.
%\textcolor{red}{Fin changements}

\subsubsection{Sampler for $Seq_{(1,1)}$.}
Let us start by introducing our Boltzmann sampler  for $Seq_{(1,1)}$ and proving its correctness:

%Generating function :\\
%Let $S_{b}(X+Y)$ be the generating function of balanced sequences of $X$ and $Y$.\\
%$ S_{b}(X+Y)=\sum\limits_{p>0}\frac{(2p)!}{(p!)^2}X^pY^p$\\
%$ S_{b}(X+Y)=S_{b,2}(t)=\sum\limits_{p>0}\frac{(2p)!}{(p!)^2}t^{2p}$

%\smallskip

\begin{algorithm}[H]
\caption{$\Gamma_x Seq_{(1,1)}$}

\KwIn{the parameter $x$}

\KwOut{a balanced sequence of $\mathcal{Z}_b$ and $\mathcal{Z}_w$.}

Let $L$ be a random variable in $\mathbb{N}^*$ verifying 
$\mathbb{P}(L= l)=\frac{(2l)!}{(l!)^2} \frac{x^{2l}}{S_{(1,1)}(x)}$\\
Draw $l$ according to the law of $L$.\\
Let $M$ be a $(2l)$-uple,
select uniformly $l$ positions belongs the $2l$ entries in $M$.\\ This positions are the $\mathcal{Z}_b$ entries of $M$, the other ones are the $\mathcal{Z}_w$ entries.

%$M\leftarrow Concat(M,])$\\
\Return $(M)$.
\end{algorithm}
\begin{lemma}
 Algorithm 3 is a valid Boltzmann sampler for $Seq_{(1,1)}$.
\end{lemma}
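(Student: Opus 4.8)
The plan is to verify directly that Algorithm 3 draws each object $a \in Seq_{(1,1)}$ of size $2l$ with probability $x^{2l}/S_{(1,1)}(x)$, which is the defining property of a Boltzmann sampler. The algorithm proceeds in two stages: first it picks a half-length $l$, then it picks one of the balanced sequences of that length uniformly. So the probability of outputting a given $a$ of size $2l$ factorizes as $\mathbb{P}(L=l)$ times the conditional probability of selecting $a$ among all balanced sequences of half-length $l$.

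First I would count the balanced sequences of half-length $l$: choosing which $l$ of the $2l$ positions carry a $\mathcal{Z}_b$ determines the object, so there are $\binom{2l}{l} = \frac{(2l)!}{(l!)^2}$ of them, and each is selected with probability $1/\binom{2l}{l}$ in the second stage. Next I would note that $\mathbb{P}(L=l) = \frac{(2l)!}{(l!)^2}\cdot\frac{x^{2l}}{S_{(1,1)}(x)}$ by definition of $L$, and check that this is a genuine probability distribution on $\mathbb{N}$ (equivalently $\mathbb{N}^*$ together with $l=0$) — this is exactly the statement that $\sum_{l\ge 0}\binom{2l}{l}x^{2l} = S_{(1,1)}(x) = (1-4x^2)^{-1/2}$, which is the generating function identity recalled in the text, valid for $|x|<1/2$. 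Multiplying the two stages, the $\frac{(2l)!}{(l!)^2}$ factors cancel and one gets exactly $x^{2l}/S_{(1,1)}(x)$, independent of which particular $a$ of that size was produced, as required.

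Finally I would remark that the second-stage sampling is uniform over the $\binom{2l}{l}$ balanced sequences precisely because "select uniformly $l$ positions among $2l$" puts equal weight on each $l$-subset, and distinct $l$-subsets give distinct objects; hence every object of a given size is equiprobable, which is the Boltzmann property. I do not expect any real obstacle here: the only thing to be careful about is the bookkeeping of sizes (an object of size $n=2l$ only, so odd sizes have probability $0$, consistent with $S_{(1,1)}$ having only even-degree terms) and invoking convergence of the series for $x\in(0,1/2)$ so that $L$ is well-defined. The argument is a short direct computation rather than anything structural.
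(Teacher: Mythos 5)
Your proof is correct and follows essentially the same route as the paper's: factor the probability of producing a given output $\alpha$ of size $2l$ as $\mathbb{P}(L=l)\cdot\binom{2l}{l}^{-1}$ and observe that the $\frac{(2l)!}{(l!)^2}$ factors cancel, leaving $x^{2l}/S_{(1,1)}(x)$. The extra checks you include (that $L$ is a genuine distribution via $\sum_{l}\binom{2l}{l}x^{2l}=(1-4x^2)^{-1/2}$, and that distinct position subsets give distinct objects) are sound and merely make explicit what the paper leaves implicit.
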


\begin{proof}
Let $\alpha$ be an output of this algorithm.
The probability to draw $\alpha$ is the probability to draw the right length and then to draw the right positions for $\mathcal{Z}_b$. So,
%$\mathbb{P}(\alpha)= \mathbb{P}(L= \frac{|\alpha|}{2}).\mathbb{P}(atoms au bon endroit)$\\
$$\mathbb{P}(\alpha)= \frac{|\alpha|!}{(\frac{|\alpha|}{2}!)^2} \frac{x^{|\alpha|}}{S_{(1,1)}(x)} . \frac{(\frac{|\alpha|}{2}!)^2}{|\alpha|!} = \frac{x^{|\alpha|}}{S_{(1,1)}(x)}$$

\end{proof}

\subsubsection{An isomorphism for $\Theta Cyc_{(1,1)}$.}

%\bigskip

%$Seq(Q_b(X+Y))=Seq_b(X+Y)$ avec decomposition unique\\
%$Q_b(X+Y)$ is an excursion with no return to 0 before the end\\

%\begin{algorithm}[H]
%\caption{$\Gamma Q_b(t)$}

%\KwIn{the parameter $t$}

%\KwOut{a balanced sequence of 2 types of atoms with no under balanced sequence}

%Draw $M$ with $\Gamma Seq_b(t)$.\\
%\Return the first balanced prefix of $M$
%\end{algorithm}

%\begin{proof}
%$Seq_b(X+Y) =Seq(Q_b(X+Y)) $ is generated with a Boltzmann probability, so $Q_b(X+Y)$ too.
%\end{proof}

%\bigskip
Another classical operator, in structural combinatorics, is the pointing operator which can be defined as follows :

\begin{definition} Let $\mathcal{C}$ be a combinatorial class, the combinatorial class $\Theta\mathcal{C}$ is formally defined as $\displaystyle{\sum_{n>0} \mathcal{C}_n\times\{\epsilon_1,...,\epsilon_n\}}$ where the $\epsilon_i$ are distinct neutral objects (0-sized objects) and $\mathcal{C}_n$ is the sub-class of $\mathcal{C}$ of the elements of size $n$. 
\end{definition}
The generating function of the pointed class $\Theta\mathcal{C}$ is $C^{\bullet}(x)=x.\dfrac{dC}{dx}$. We can interpret this operator by saying that each object in $\Theta\mathcal{C}$ is an object in $\mathcal{C}$ with a tagged atom on it.

\begin{theorem}
 $\Theta Cyc_{(1,1)}$ is isomorphic to $\sum\limits_{n>0} \varphi(n) Rep_n(Seq_{(1,1)})$, where $Rep_n(\mathcal{A})$ is the class $\{\underbrace{aa...a}_{n \mathrm{\;times}}; a\in\mathcal{A}\}$.
\end{theorem}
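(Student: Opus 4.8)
The plan is to exhibit an explicit bijection between the two classes and check that it respects size, which suffices since both sides are combinatorial classes. Recall that an element of $\Theta Cyc_{(1,1)}$ is a $(1,1)$-balanced cycle with one of its atoms tagged. A $(1,1)$-balanced cycle of size $2m$ is an equivalence class of $(1,1)$-balanced sequences of length $2m$ under cyclic rotation; its stabilizer in the rotation group has order equal to some divisor $n$ of $m$, and the cycle is then the $n$-fold repetition of a primitive $(1,1)$-balanced sequence of length $2m/n$. First I would make this structural fact precise: a $(1,1)$-balanced cycle $c$ of size $2m$ corresponds canonically to a pair $(n,w)$ where $n \mid m$ and $w$ is a \emph{primitive} (aperiodic) $(1,1)$-balanced word of length $2m/n$, taken up to rotation — equivalently, $c = Rep_n([w])$ where $[w]$ is the primitive cyclic word.

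Next I would handle the pointing. Tagging an atom of a cycle $c$ of period $n$ (so $c = Rep_n$ of a primitive cycle of length $\ell := 2m/n$) amounts to choosing one of the $2m$ positions; but two taggings give the same element of $\Theta Cyc_{(1,1)}$ iff they differ by a rotation fixing $c$, i.e.\ by a multiple of $\ell$. Hence the tagged positions fall into $\ell$ orbits under the stabilizer, and a pointed cycle of period $n$ with underlying primitive length $\ell$ is equivalently: a primitive $(1,1)$-balanced \emph{word} $w$ of length $\ell$ (this already fixes the base point, killing the ``up to rotation'' ambiguity) together with the repetition count $n$. The count of such data of total size $2m$ is, for each $n \mid m$, the number of primitive $(1,1)$-balanced words of length $2m/n$. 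On the other side, $\sum_{n>0}\varphi(n)\,Rep_n(Seq_{(1,1)})$ has, in size $2m$, exactly $\sum_{n \mid m}\varphi(n)\cdot s_{2m/n}$ objects, where $s_{2\ell} = \binom{2\ell}{\ell}$ is the number of $(1,1)$-balanced words of length $2\ell$.

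So the crux is a bijection between (primitive $(1,1)$-balanced words of length $\ell$) and ($\varphi(n)$-indexed copies of all $(1,1)$-balanced words of length $\ell$), summed appropriately — i.e.\ the identity $\sum_{d \mid \ell}(\text{primitive balanced words of length } d) \cdot (\text{something}) = \varphi(n)\binom{\ell}{\ell/2}$-style bookkeeping, which is exactly Burnside/necklace-counting run backwards. Concretely, I would argue: an arbitrary $(1,1)$-balanced word $u$ of length $\ell$ has a well-defined primitive root $w$ (the shortest word of which $u$ is a power), say $u = w^{t}$ with $w$ primitive of length $\ell/t$; attaching to $u$ a label in $\{\epsilon_1,\dots,\epsilon_n\}$ and repeating $n$ times produces the pointed cycle $Rep_{nt}([w])$ with the base point determined by where $w$ starts in $u$ and the chosen label selecting which of the $n$ repeated blocks carries the tag — and one checks this map is a size-preserving bijection onto $\Theta Cyc_{(1,1)}$, with inverse obtained by reading off the period, the primitive root as a rooted word, and the tagged block. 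The main obstacle — and the point deserving the most care — is getting the two layers of ``up to rotation'' to cancel correctly: the rotation ambiguity of the cycle is exactly absorbed by promoting the primitive \emph{cycle} to a primitive \emph{word} (choosing a base point), while the leftover $\varphi(n)$ factor and the non-primitive words on the right-hand side must be matched against the choice of which repeated block is tagged together with the internal period $t$ of the word inside one block. I expect the cleanest write-up to verify the bijection at the level of generating functions as a sanity check — $C_{(1,1)}^{\bullet}(x) = \sum_{n\ge 1}\varphi(n)\,S_{(1,1)}(x^n)$, which follows from the cycle construction's logarithmic formula differentiated — but the theorem as stated asks for the class isomorphism, so the explicit map above is what I would present, with the generating-function identity relegated to a remark.
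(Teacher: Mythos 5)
There is a genuine gap, and it sits exactly where you flagged the delicate point. You assert that two taggings of a cycle $c$ give the same element of $\Theta Cyc_{(1,1)}$ iff they differ by a rotation fixing $c$, so that a cycle of size $2m$ with primitive root of length $\ell$ has only $\ell$ distinct pointed versions. This contradicts the paper's Definition of $\Theta\mathcal{C}$ as $\sum_{n>0}\mathcal{C}_n\times\{\epsilon_1,\dots,\epsilon_n\}$: every object of size $N$ contributes exactly $N$ pointed objects by fiat (no quotient by automorphisms), which is what makes the generating function $xC'(x)$. With your convention the left-hand side has $\sum_{n\mid m}p_{2m/n}$ elements in size $2m$ (where $p_{2\ell}$ counts primitive balanced words), whereas the right-hand side has $\sum_{n\mid m}\varphi(n)\binom{2m/n}{m/n}=\sum_{e\mid m}e\,p_{2m/e}$ (substitute $\binom{2k}{k}=\sum_{d\mid k}p_{2k/d}$ and use $\sum_{n\mid e}\varphi(n)=e$). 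These disagree already at $m=2$: there are two balanced cycles of size $4$ (namely $[bbww]$ and $[bwbw]$), hence $8$ pointed cycles under the paper's definition, and indeed $\varphi(1)\binom{4}{2}+\varphi(2)\binom{2}{1}=8$ on the right, while your count gives $p_4+p_2=4+2=6$. So the bijection as set up cannot exist; the correct matching must pair all $2m$ taggings of a cycle of primitive length $\ell_0$ and maximal period $p$ with the $\ell_0\sum_{d\mid p}\varphi(d)=\ell_0 p=2m$ preimages coming from all divisors $d\mid p$ and all $\varphi(d)$ copies, which also exposes a second slip: you attach a label in $\{\epsilon_1,\dots,\epsilon_n\}$ ($n$ choices) where the class only provides multiplicity $\varphi(n)$.

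For contrast, the paper does not attempt a bijection at all: it computes the generating function of $Cyc(\mathcal{Z}_w+\mathcal{Z}_b)$ via the $\sum_k\frac{\varphi(k)}{k}\log\frac{1}{1-A(x^k)}$ formula, extracts the $(1,1)$-diagonal to get $C_{(1,1)}$, and applies $x\frac{d}{dx}$ to read off $C_{(1,1)}^{\bullet}=\sum_{n>0}\varphi(n)S_{(1,1)}(X^n)$; since all classes involved are unlabelled, this equality of counting sequences is what the theorem (and the sampler built on it) actually needs. Your generating-function ``sanity check'' is in fact the paper's entire proof, and the probabilistic bookkeeping that correctly resolves the $\varphi$-versus-divisor cancellation you were after appears later, in the correctness proof of Algorithm 7, where $\mathbb{P}(c)=\frac{x^{|c|}}{C_v^{\bullet}}\cdot s\sum_{d\mid p}\varphi(d)=|c|\frac{x^{|c|}}{C_v^{\bullet}}$. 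A fully bijective proof along your lines is possible, but it requires an explicit (non-canonical) identification of $\{1,\dots,p\}$ with $\bigsqcup_{d\mid p}\{1,\dots,\varphi(d)\}$, not the orbit count you used.
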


\begin{proof}

Let $C_{bic}$ be the generating function for $Cyc(\mathcal{Z}_w+\mathcal{Z}_b)$, without any constraint on the number of beads ($\mathcal{Z}_w$ and $\mathcal{Z}_b$) of each color. We can write the generating function $C_{bic}$ as follows :
$$ C_{bic}=\sum\limits_{n>0}\frac{-\varphi(n)}{n}\log(1-(Z_w^n+Z_b^n))=\sum\limits_{n>0}\frac{\varphi(n)}{n} \sum\limits_{k>0}\frac{1}{k} (Z_b^n+Z_w^n)^k$$
$$C_{bic}=\sum\limits_{n>0}\frac{\varphi(n)}{n} \sum\limits_{k>0}\frac{1}{k} \sum\limits_{p_1+p_2 = k} \frac{k!}{p_1!p_2!}Z_b^{p_1n}Z_w^{p_2n} $$

%\smallskip

The notion of diagonal for a bivariate generating function is needed for what follows. Let $f(X,Y)=\sum_i a_{i,j}X^iY^j$ be a bivariate generating function, the function $g(Z)=\sum a_{n,n} Z^{2n}$ is called the \textit{(1,1)-diagonal }of $f(X,Y)$ and  it is denoted by $\Delta f$. 

So, by definition $C_{(1,1)}=\Delta C_{bic}$ and we can use the previous formula to obtain :
$$C_{(1,1)}=\sum\limits_{l>0}([Z_b^l Z_w^l]C_{bic})X^{2l}=\sum\limits_{n>0}\frac{\varphi(n)}{n}\sum\limits_{p>0}\frac{1}{2p} \frac{(2p)!}{(p!)^2}X^{2np}$$

Now, pointing the Cyc(1,1) class yields :
$$ C_{(1,1)}^{\bullet}=\sum\limits_{n>0} \varphi(n) \sum\limits_{p>0}\frac{(2p)!}{(p!)^2}X^{2np}$$
$$ C_{(1,1)}^{\bullet}=\sum\limits_{n>0} \varphi(n) S_{(1,1)}(X^{n}) $$

%This last equation suggests the following isomorphism :

\end{proof}

At this stage, it is possible to draw an object of size $n$ in $ Cyc_{(1,1)}$ using classical recursive method \cite{FZV}. But here we pitch on Boltzmann point of view which avoids costly preprocessing calculus.

This isomorphism allows us to describe a Boltzmann sampler for $\Theta Cyc_{(1,1)}$~:

%\bigskip

\begin{algorithm}[H]
\caption{$\Gamma_x \Theta Cyc_{(1,1)}$}

\KwIn{the parameter $x$}

\KwOut{a $(1,1)-$balanced cycle of $\mathcal{Z}_b$ and $\mathcal{Z}_w$.}

$n:=1$\\
$S:=\frac{\varphi(n) S_{(1,1)}(x^{n}) } {C_{(1,1)}^{\bullet}(x)}$\\
Draw a real number $u$ uniformly in $[0,1]$\\
\While{$u>S$}{
$n:=n+1$\\
$S:=S+\frac{\varphi(n) S_{(1,1)}(x^{n}) } {C_{(1,1)}^{\bullet}(x)}$
}
\Return $[\Gamma_x (Rep_n(Seq_{(1,1)}))]$\\
%$M\leftarrow Concat(M,])$\\
%\Return $[\delta_kM]$.
\end{algorithm}
\begin{corollary}
 Algorithm 4 is a valid Boltzmann sampler for $\Theta Cyc_{(1,1)}$.
\end{corollary}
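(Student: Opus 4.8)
The plan is to show directly that Algorithm~4 meets the defining requirement of a Boltzmann sampler: that it returns a prescribed object $a\in\Theta Cyc_{(1,1)}$ with probability $x^{|a|}/C_{(1,1)}^{\bullet}(x)$. The proof will just be the composition of three ingredients we already have: the isomorphism of Theorem~1, which lets us replace the target class by the countable disjoint sum $\sum_{n>0}\varphi(n)\,Rep_n(Seq_{(1,1)})$; the Boltzmann rule for a (countable) disjoint union, which selects the component of index $n$ with probability proportional to its generating function $\varphi(n)S_{(1,1)}(x^n)$; and the $Rep_n$ rule together with Lemma~1, which show that, once $n$ is fixed, the call $\Gamma_x Rep_n(Seq_{(1,1)})$ is itself a valid Boltzmann sampler. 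The \texttt{while} loop is simply the inverse-c.d.f.\ implementation of the component-selection step.

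In more detail, I would first note that by Theorem~1 one has $\sum_{n>0}\varphi(n)S_{(1,1)}(x^n)=C_{(1,1)}^{\bullet}(x)$, so the partial sums $S$ computed in the loop increase to $1$; hence for $u$ uniform in $[0,1]$ the loop terminates almost surely and outputs the integer $n$ with probability $p_n:=\varphi(n)S_{(1,1)}(x^n)/C_{(1,1)}^{\bullet}(x)$. Next, substituting $x\mapsto x^n$ in the $Rep_n$ construction shows that $Rep_n(Seq_{(1,1)})$ has ordinary generating function $S_{(1,1)}(x^n)$; and applying the $Rep_n$ sampler rule, whose inner call is $\Gamma_{x^n}Seq_{(1,1)}$, together with Lemma~1 at the parameter $x^n$, the conditional law of the produced word is
$$\mathbb{P}\bigl(\Gamma_x Rep_n(Seq_{(1,1)})=b\bigr)=\frac{(x^n)^{2p}}{S_{(1,1)}(x^n)}=\frac{x^{|b|}}{S_{(1,1)}(x^n)}$$
for $b$ of size $|b|=2np$ (and $0$ otherwise). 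Finally, identifying through Theorem~1 each $a\in\Theta Cyc_{(1,1)}$ with a pair formed of one of the $\varphi(n)$ copies of index $n$ and a word $b\in Rep_n(Seq_{(1,1)})$ with $|b|=|a|$ — the choice of copy contributing a factor $1/\varphi(n)$ — one obtains
$$\mathbb{P}(\text{output}=a)=p_n\cdot\frac{1}{\varphi(n)}\cdot\frac{x^{|a|}}{S_{(1,1)}(x^n)}=\frac{x^{|a|}}{C_{(1,1)}^{\bullet}(x)},$$
which is exactly the Boltzmann property.

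The routine parts will be the substitution rule for $Rep_n$ and the inverse-c.d.f.\ bookkeeping; the point that will deserve care is the $\varphi(n)$ multiplicity. Algorithm~4 as written returns $[\Gamma_x Rep_n(Seq_{(1,1)})]$ without visibly selecting among the $\varphi(n)$ isomorphic copies of $Rep_n(Seq_{(1,1)})$ supplied by Theorem~1, so in the write-up I would read the final step as also drawing that copy uniformly (a draw among $0$-sized neutral objects, hence size-irrelevant) — or, equivalently, observe that all $\varphi(n)$ copies of a given object share the same size, so the Boltzmann identity survives once the factor $1/\varphi(n)$ is inserted. The main obstacle is thus to make the bijection of Theorem~1 fully explicit — to describe precisely how a repeated bridge $a^n$ together with a copy label corresponds to a pointed $(1,1)$-balanced cycle — since it is this bijection that promotes the probability computation above to a complete proof. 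The only other thing to watch is the convention on the empty sequence, which must be excluded from $Seq_{(1,1)}$ (as it already is in Algorithm~3, where $L\in\mathbb{N}^{*}$) for $\sum_{n>0}\varphi(n)S_{(1,1)}(x^n)$ to be a convergent series.
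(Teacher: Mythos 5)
Your framework (reduce to the disjoint sum $\sum_{n>0}\varphi(n)Rep_n(Seq_{(1,1)})$, then transfer the Boltzmann property through the isomorphism of Theorem~1) is reasonable, but the step you defer at the end --- ``making the bijection of Theorem~1 fully explicit'' --- is not a routine loose end: it is the entire mathematical content of the proof, and the probability formula you present as the conclusion does not describe what Algorithm~4 actually does. The algorithm returns the cycle $[MM\cdots M]$, and the map from a pair (repetition order $n$, motif $M$) to a cycle is many-to-one: if the underlying cycle $c$ has primitive decomposition $c=u^{p}$, then $c$ is produced whenever the loop selects \emph{any} divisor $d$ of $p$ and the inner sequence sampler returns \emph{any} of the $|u|$ distinct cyclic shifts of $u^{p/d}$. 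Hence the correct probability is a sum over $d\mid p$, not the single term $p_n\cdot\frac{1}{\varphi(n)}\cdot\frac{x^{|a|}}{S_{(1,1)}(x^{n})}$ attached to one fixed $n$. The paper's proof (written out for the general Algorithm~7, to which the one-line proof of this corollary defers) is exactly this computation:
$$\mathbb{P}(c)=\sum_{d\mid p}\frac{\varphi(d)\,S_{(1,1)}(x^{d})}{C_{(1,1)}^{\bullet}(x)}\cdot\frac{|u|\,x^{|c|}}{S_{(1,1)}(x^{d})}=\frac{|u|\,x^{|c|}}{C_{(1,1)}^{\bullet}(x)}\sum_{d\mid p}\varphi(d)=\frac{|c|\,x^{|c|}}{C_{(1,1)}^{\bullet}(x)},$$
which is the aggregate Boltzmann weight of the $|c|$ pointed versions of $c$, as required. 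The identity $\sum_{d\mid p}\varphi(d)=p$ is precisely what makes the $\varphi(n)$ multiplicities of Theorem~1 match up with the shifts and divisors, i.e.\ what guarantees that the size-preserving bijection you postulate can be chosen compatibly with the algorithm's actual output map $M\mapsto[M^{n}]$. You correctly sensed where the difficulty lies, but without this counting step your argument only shows that a Boltzmann sampler for the abstract disjoint sum \emph{would} exist; it does not show that Algorithm~4, which collapses many draws onto the same cycle, is one for $\Theta Cyc_{(1,1)}$.
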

\begin{proof}
This is a corollary of the correctness of our general sampler, given in section four.
\end{proof}
\vspace{-0.7cm}
\subsubsection{A Boltzmann Sampler for $Cyc_{(1,1)}$.}
%\textcolor{red}{Des choses � changer (cf commentaire michele)}

We have obtained a Boltzmann sampler for  $\Theta Cyc_{(1,1)}$. This is enough to uniformly generate (1,1)-balanced cycles. 
%But, if we want to use the entire power of the Boltzmann generation, for instance if we want to build $Cyc(Cyc_{(1,1)})$, we need to have a ``true'' Boltzmann sampler for $Cyc_{(1,1)}$ (i.e. the distribution of probability must be $P_x(a)=\dfrac{x^n}{C(x)}$). The following idea due to P. Flajolet \cite{FSo} shows how to transform a Boltzmann sampler from $\Theta \mathcal{C}$ into a Boltzmann sampler for $\mathcal{C}$.
But, the sampler does not have a Boltzmann distribution for $Cyc_{(1,1)}$, so it can not be called by another constructor.
 For instance,  $Cyc(\Theta Cyc_{(1,1)})$ is not equal to $Cyc(Cyc_{(1,1)})$.
%So we can not use the sampler for $\Theta Cyc_{(1,1)}$ instead of the sampler for $Cyc_{(1,1)}$ (for example, to build $Cyc(Cyc_{(1,1)})$). 
Indeed, small objects are drawn with a smaller probability in $\Theta Cyc_{(1,1)}$. 
To unbias the sampler, we are going to change its parameter according to a well-chosen density law $f_x(u)$. A similar idea occurs in \cite{R}.

\begin{lemma} Let $C(x)=\sum_{n>0} c_n x^n$ be a generating function (with $C(0)=0$). 
 For any fixed $x$ in the convergence disc of $C$, the function $f_x(u)=\dfrac{C^{\bullet}(xu)}{uC(x)}$ is a density of probability on $[0,1]$.
\end{lemma}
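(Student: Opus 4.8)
The statement to prove is that $f_x(u)=\dfrac{C^{\bullet}(xu)}{uC(x)}$ is a probability density on $[0,1]$, for $C(x)=\sum_{n>0}c_nx^n$ with $C(0)=0$ and $x$ fixed in the disc of convergence. There are two things to check: non-negativity of $f_x$ on $[0,1]$, and that $\int_0^1 f_x(u)\,du=1$. I would first deal with non-negativity. Since $c_n\ge 0$ (these are counting coefficients of a combinatorial class, but even abstractly we only need $c_n\ge 0$), we have $C^{\bullet}(y)=y\,C'(y)=\sum_{n>0} n c_n y^n$, which has non-negative coefficients, hence $C^{\bullet}(xu)\ge 0$ for $u\in[0,1]$ (as $0\le xu< x$ lies in the disc of convergence, assuming $x>0$; the case $x=0$ is degenerate and excluded since then $C(x)=0$). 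Also $uC(x)>0$ for $u\in(0,1]$, so $f_x(u)\ge 0$ there; at $u=0$ the apparent singularity is removable because $C^{\bullet}(xu)=\sum_n nc_n x^n u^n$ is divisible by $u$, so $f_x(0)=c_1 x/C(x)$ is finite. Thus $f_x$ is a well-defined non-negative function on $[0,1]$.

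For the integral, the clean route is to expand in the power series and integrate term by term, which is justified by uniform convergence of the series for $C^{\bullet}(xu)/u$ on $[0,1]$ since $x$ is strictly inside the disc of convergence. Concretely:
\[
\int_0^1 f_x(u)\,du=\frac{1}{C(x)}\int_0^1 \frac{C^{\bullet}(xu)}{u}\,du=\frac{1}{C(x)}\int_0^1 \sum_{n>0} n c_n x^n u^{n-1}\,du=\frac{1}{C(x)}\sum_{n>0} c_n x^n=\frac{C(x)}{C(x)}=1,
\]
where the key computation is $\int_0^1 n u^{n-1}\,du=1$, which is exactly the point of dividing by $u$: it converts $C^{\bullet}$ back into $C$ upon integration. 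An alternative, coordinate-free way to see the same thing is to substitute $y=xu$, giving $\int_0^1 \frac{C^{\bullet}(xu)}{u}\,du=\int_0^x \frac{C^{\bullet}(y)}{y}\,dy=\int_0^x C'(y)\,dy=C(x)-C(0)=C(x)$, which uses $C(0)=0$; I would probably present this substitution version as it is shorter and makes transparent why the hypothesis $C(0)=0$ is needed.

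The only subtlety — and the one place a careful reader would want a word — is the interchange of summation and integration (or, in the substitution version, the justification that $C^{\bullet}(y)/y=C'(y)$ is integrable up to $y=x$ and that the fundamental theorem of calculus applies). This is routine: for $x$ strictly inside the radius of convergence $\rho$, the series $\sum_{n>0} n c_n x^n u^{n-1}$ is dominated on $u\in[0,1]$ by $\sum_{n>0} n c_n x^n<\infty$ (again by $x<\rho$), so dominated convergence — or simply uniform convergence of partial sums of a power series on compact subsets of the disc — legitimizes term-by-term integration. So there is no real obstacle; the lemma is essentially the observation that division by $u$ is the inverse operation, at the level of generating functions, of the pointing-then-integrating that one performs when unbiasing a pointed sampler, and the proof is a two-line computation once non-negativity and the convergence bookkeeping are noted.
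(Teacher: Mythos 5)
Your proof is correct and follows essentially the same route as the paper's: expand $C^{\bullet}(xu)/u$ as a power series, swap sum and integral, and use $\int_0^1 nu^{n-1}\,du=1$ to recover $C(x)/C(x)=1$. Your extra care about non-negativity, the removable singularity at $u=0$, and the justification of the interchange (plus the substitution $y=xu$ variant) are welcome refinements, but the core argument is identical to the paper's.
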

\begin{proof}
 Clearly $f$ is non negative. Now, it remains only to prove that \\ $\displaystyle\int_{u=0}^{1}{\dfrac{C^{\bullet}(xu)}{uC(x)}du}=1$. We can expand the serie to $C^{\bullet}(xu)$, $\displaystyle\int_{u=0}^{1}{\dfrac{\sum_{n>0} nc_n(ux)^n}{uC(x)}du}$. We now swap the sum and integral, we have : 
$$\displaystyle{\sum_{n>0}{\dfrac{nc_nx^n}{C(x)}\int_{u=0}^{1}{u^{n-1}}du}=\dfrac{1}{C(x)}\sum_{n>0}{nc_nx^n\left[\dfrac{u^n}{n}\right]_{u=0}^{1}}=1}.$$
\end{proof}

\begin{theorem}
 The following sampler (Algorithm 5) gives a valid Boltzmann sampler for $\mathcal{C}$ with parameter~$x$ from a Boltzmann sampler for $\Theta \mathcal{C}$.
\end{theorem}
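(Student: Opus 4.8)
The plan is to describe Algorithm 5 explicitly — since the excerpt cuts off just before it — and then verify that it outputs an object $a$ of size $n$ with probability exactly $x^n/C(x)$. The natural algorithm, given the preceding Lemma, is: first draw a real $u \in [0,1]$ according to the density $f_x(u) = C^{\bullet}(xu)/(uC(x))$; then run the given Boltzmann sampler for $\Theta\mathcal{C}$ with parameter $xu$ instead of $x$, and finally erase the tag (the distinguished neutral object $\epsilon_i$) to return the underlying object of $\mathcal{C}$. So the proof must (i) recall that $f_x$ is indeed a density, which is exactly the content of the Lemma just proved, and (ii) compute the resulting output distribution on $\mathcal{C}$ by integrating over $u$.

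The key computation goes as follows. Fix an object $a \in \mathcal{C}$ with $|a|=n \geq 1$. In $\Theta\mathcal{C}$ there are exactly $n$ objects projecting down to $a$ (the object $a$ with each of its $n$ atoms tagged in turn), and each such tagged object has size $n$ as well, since the tag is a neutral $0$-sized object. A Boltzmann sampler for $\Theta\mathcal{C}$ with parameter $y$ outputs each object of $\Theta\mathcal{C}$ of size $n$ with probability $y^n/C^{\bullet}(y)$. Hence, conditionally on the drawn value $u$, the probability that Algorithm 5 returns $a$ (after erasing the tag) is $n\cdot (xu)^n / C^{\bullet}(xu)$. Integrating against the density,
$$\mathbb{P}(\text{return } a) = \int_{0}^{1} \frac{n (xu)^n}{C^{\bullet}(xu)} \cdot \frac{C^{\bullet}(xu)}{u\,C(x)}\, du = \frac{n x^n}{C(x)} \int_{0}^{1} u^{n-1}\, du = \frac{x^n}{C(x)},$$
which is precisely the Boltzmann distribution for $\mathcal{C}$ at parameter $x$. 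One should also note the termination/well-definedness point: the inner sampler is only ever called at parameter $xu$ with $u\in[0,1]$, which lies inside the convergence disc whenever $x$ does, so all the geometric/logarithmic laws invoked inside $\Gamma_{xu}\Theta\mathcal{C}$ have valid parameters.

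The only genuinely delicate point — and the one I would flag as the main obstacle — is justifying the interchange of summation and integration (Fubini / monotone convergence) uniformly enough to carry the $u$-dependence through, together with the implicit claim that sampling $\Theta\mathcal{C}$ at a perturbed parameter $xu$ is still legitimate. Since all terms are non-negative (the $c_n$ are counting numbers, $u^{n-1}\geq 0$, and $C^{\bullet}(xu)>0$ for $x>0$), Tonelli's theorem applies without fuss, so this is really just bookkeeping; the Lemma has already done the hard analytic work. A secondary subtlety is that the argument tacitly requires $C(0)=0$ (no neutral objects in $\mathcal{C}$), so that $C^{\bullet}(xu)/u$ is a genuine power series with no pole at $u=0$ and the density is integrable — this matches the hypothesis of the Lemma and the standing assumption ``$\mathcal{C}$ must not have neutral objects'' from the table of samplers. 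With those two points dispatched, the displayed one-line integral computation is the whole proof.
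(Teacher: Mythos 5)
Your proof is correct and follows essentially the same route as the paper: both arguments cancel $C^{\bullet}(xu)$ between the conditional Boltzmann probability of $\Theta\mathcal{C}$ at parameter $xu$ and the density $f_x(u)$, then reduce to $\int_0^1 u^{n-1}\,du = 1/n$ to recover $x^n/C(x)$. The only divergence is that the paper's Algorithm~5 also includes a Bernoulli$\bigl(c_0/C(x)\bigr)$ branch (with density renormalized by $C(x)-c_0$) to handle size-$0$ objects, which you omit under your explicitly flagged assumption $C(0)=0$ --- harmless for the intended application to cycles, which have no empty objects.
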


\begin{algorithm}[H]
\caption{$\Gamma_x \mathcal{C}$}

\KwIn{the parameter $x$}

\KwOut{an object in $\mathcal{C}$.}

Draw a real number $u$ according to the density law $\dfrac{C^{\bullet}(xu)}{u(C(x)-c_0)}$. \\

\eIf{({Bernoulli}$(\dfrac{c_0}{C(x)}) = 1$)}{
\Return an object in $\mathcal{C}_0$ drawn uniformly.}{
\Return $\Gamma_{ux}(\Theta\mathcal{C})$ and forget the point.}
%$M\leftarrow Concat(M,])$\\
%\Return $[\delta_kM]$.
\end{algorithm}

\begin{proof}
 It is sufficient to evaluate the probability that the output be of size $n$. If $n=0$, we have drawn an object in $\mathcal{C}_0$. This occurs with probability $\dfrac{c_0}{C(x)}$. If $n>0$, the probability is $(1-\dfrac{c_0}{C(x)})\cdot\displaystyle\int_{u=0}^{1}{\dfrac{nc_n(ux)^n}{uC(x)}du}=\dfrac{c_nx^n}{C(x)}$. In every cases, this is a Boltzmann probability. \end{proof}

This sampler allows us to generate extremely large $(1,1)$-cycles (more than 1000000 beads). Indeed, this sampler is clearly linear in the size of the output. The following figures (Fig.\ref{fig2}) show a $(1,1)$-cycle of size 100 (Fig. \ref{fig2a}) (only 100 beads for the legibility.) It also shows that we can compose our sampler with the classical builder $(*,+,Seq,Cyc,MSet,PSet,...)$. Figure \ref{fig2b} shows a random generated a necklace of $(1,1)$-cycles. We can see that the necklaces do not contain a lot of $(1,1)$-cycles. Moreover only one of these $(1,1)$-cycles contains a lot of beads. 

\begin{figure}%[htbp]
  \begin{center}
    \subfigure[random (1,1)-cycle of size 100.\label{fig2a}]{\includegraphics[width=1.6in]{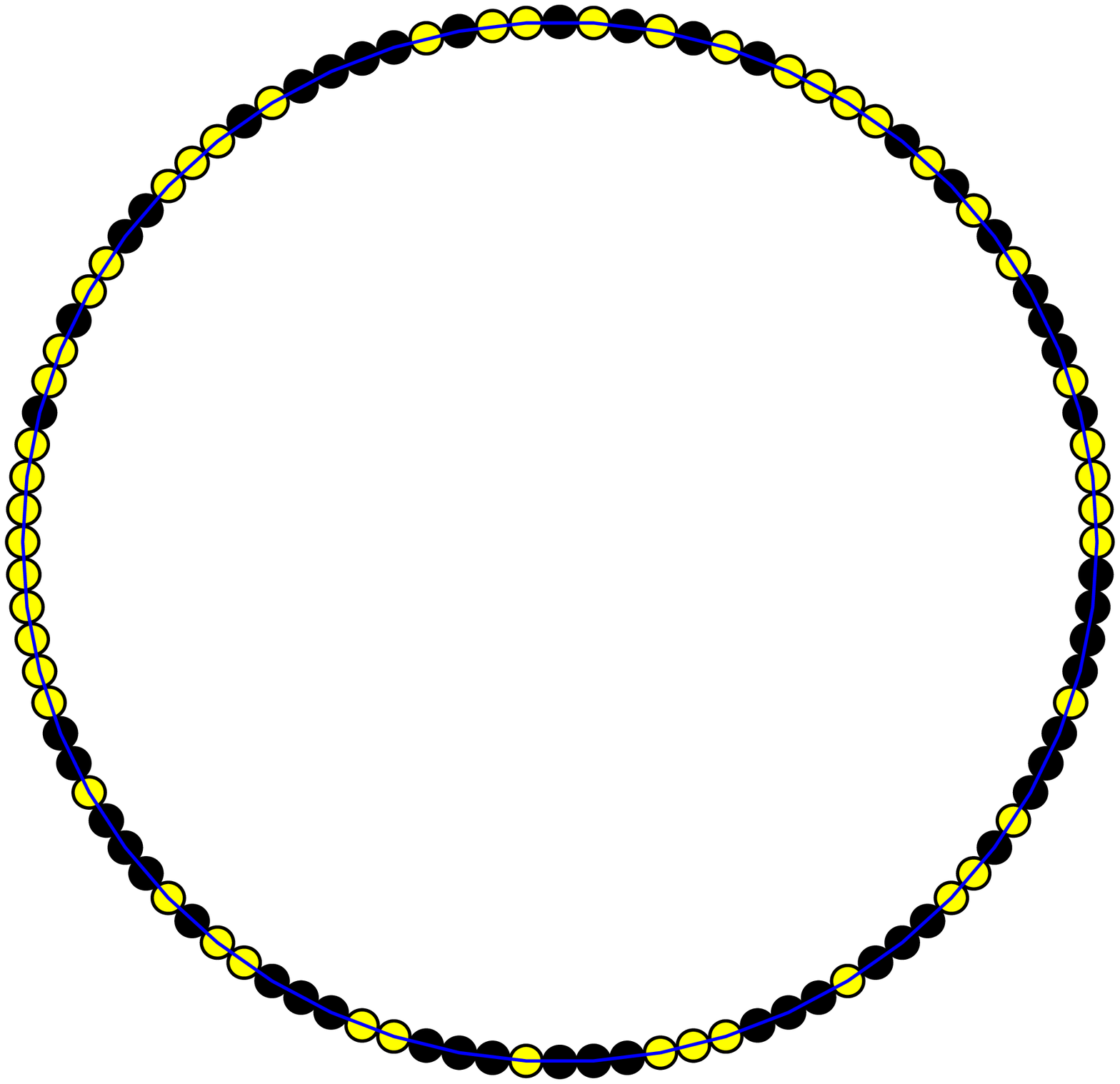}
}
     \qquad%
    \subfigure[A random necklace of (1,1)-cycles of size 200.
    \label{fig2b}]{\includegraphics[width=1.6in]{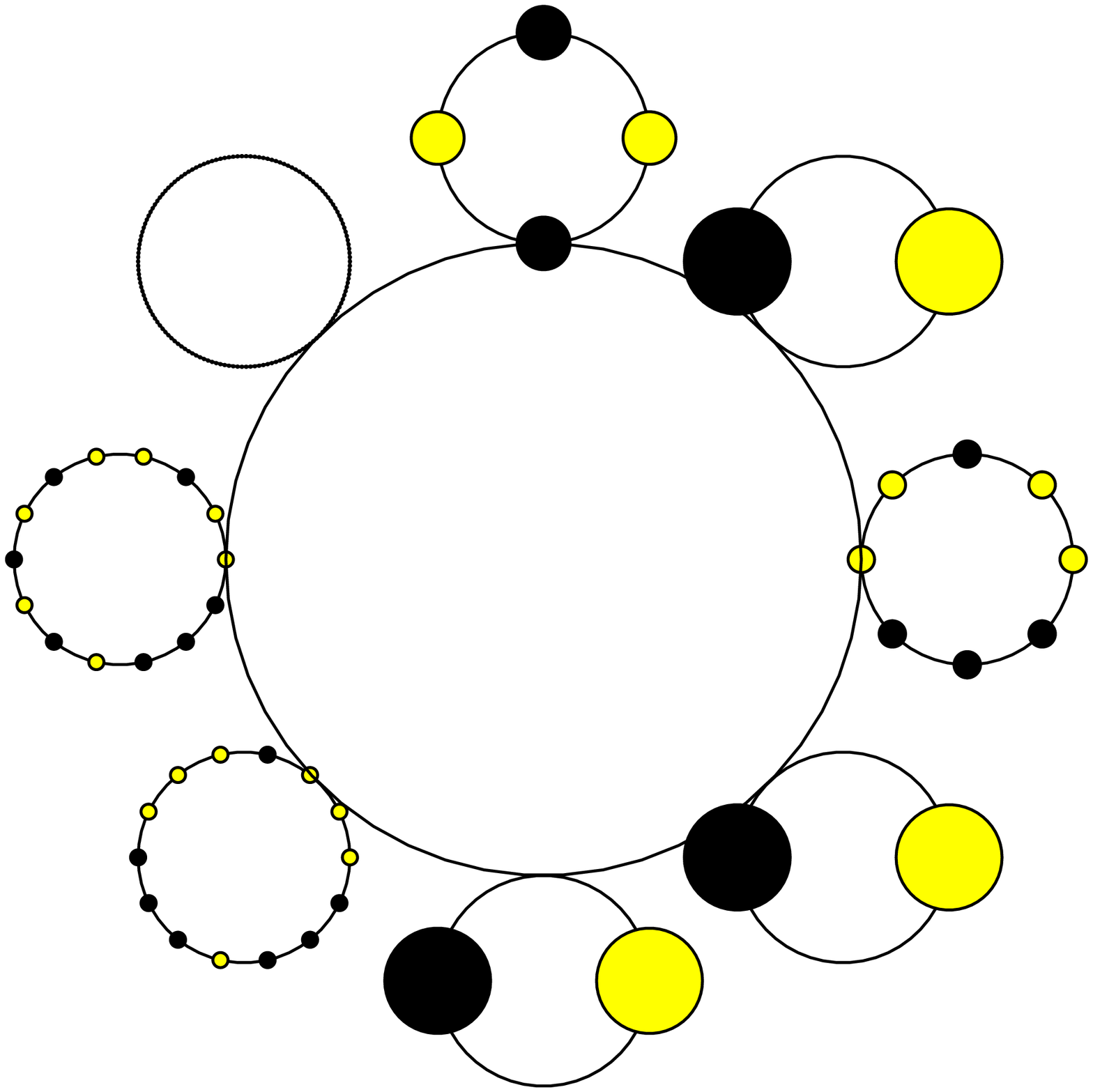}
}
    \caption{Examples of Boltzmann sampling.}
  \label{fig2}
  \end{center}
\end{figure}

\section{The general vectorial case} 

%Cas vectoriel :
In this part, we extend the previous method to all cases, algebraic or not.
Let ${v}=(v_1, v_2, ... v_m)$, and $|v| =\sum\limits_{i=1}^{m} v_i$. Our goal here is to generate $Cyc_v$ : cycles of $m$ colors, such that the number of occurences of each atom $\mathcal{Z}_i$ verify the $v$-balance condition. We follow the same principles than in section 3.2 but the proofs are slightly more technical.

\subsection{Sampler for $Seq_{v}$}

Let us recall  $S_{v}$, the generating function of ${v}$-balanced sequences:\\
 $~~~~~~~~~~~~~~~~~~~~~~~~~~~~~~~~~~~~~~\displaystyle{ S_{{v}}=\sum\limits_{p>0}\frac{(|v|p)!}{\prod\limits_{i=1}^{m}((v_ip)!)}\prod\limits_{i=1}^{m}Z_i^{v_ip}}.$ 
%In an easy transposition from $(1,1)$-balanced case, we have :

%\begin{lemma}
% The algorithm 6 is a valid Boltzmann sampler for $Seq_v$. Its arithmetic complexity is linear in the size of its output object.
%\end{lemma}
%\begin{proof}The proof can be easily transposed from $(1,1)$-balanced one. The complexity result is trivial.\end{proof}

%\smallskip
\begin{algorithm}[H] %, htbp]
\caption{$\Gamma_x Seq_{v}$}

\KwIn{the parameter $t$}

\KwOut{a $v$-balanced sequence of $\mathcal{Z}_i$.}

Let $L$ be a random variable in $\mathbb{N}^*$ verifying 
$\ds \mathbb{P}(L= l)=\frac{(|v|l)!}{\prod\limits_{i=1}^{m}(v_il!)} \frac{t^{(\sum\limits_{i=1}^{m}v_il)}}{S_{{v}}}$\\
%$\mathbb{P}(L= l)=\frac{(2l)!}{(l!)^2} \frac{x^{2l}}{S_{(1,1)}(x)}$\\
Draw $l$ according to the law of $L$.\\
Let $M$ be a $(|v|l)$-uple,\\
\For{$i$ \emph{\textbf{from }} $1$ \textbf{\emph{to }} $m$}{
Select uniformly $lv_i$ positions belongs the $(|v|l)$ entries not yet affected in $M$.\\ This positions are the $\mathcal{Z}_i$ entries of $M$.
}
%$M\leftarrow Concat(M,])$\\
%\Return $(M)$.
%\end{algorithm}
%\begin{algorithm}[H]
%\caption{$\Gamma Seq_{\vec{v}}(t)$}
%
%\KwIn{the parameter $t$, $v_1$, $v_2$, ..., $v_m$}
%
%\KwOut{a ${v}$-balanced sequence of $m$ atoms.}
%
%let $L$ be a random variable in $\mathbb{N}^*$ verifying 
%$\ds \mathbb{P}(L= l)=\frac{(\sum\limits_{i=1}^{m}k_{i}l)!}{\prod\limits_{i=1}^{m}(k_il!)} \frac{t^{\sum\limits_{i=1}^{m}k_il}}{S_{\vec{v}}(t)}$\\
%Draw $l$ according to the law of $L$.\\
%
%Draw balanced sequence $M$ of size $l$,\\
%pour $i=1$ a $m$ tirer positions des atomes 

%$M\leftarrow Concat(M,])$\\
\Return $(M)$.
\end{algorithm}
\begin{lemma}
 Algorithm 6 is a valid Boltzmann sampler for $Seq_v$. Its arithmetic complexity is linear in the size of its output object.
\end{lemma}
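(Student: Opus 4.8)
The plan is to mirror the proof of Lemma~2 almost verbatim, but with the multinomial bookkeeping of the general $v$-balanced case in place of the binomial $\binom{2l}{l}$. First I would fix an arbitrary output $\alpha$ of Algorithm~6, say with $l$ ``blocks'' so that $|\alpha|=|v|l$ and $\alpha$ has exactly $v_il$ beads of color $i$ for each $i$. The output distribution factors as a product of two independent stages: drawing the right value of $L$, and then drawing the right placement of the colored beads among the $|v|l$ positions. The probability of the first stage is the value $\mathbb{P}(L=l)$ prescribed in the algorithm, namely $\frac{(|v|l)!}{\prod_i (v_il)!}\cdot\frac{t^{|v|l}}{S_v(t)}$ (with $t=x$). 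For the second stage, observe that the loop over colors $i=1,\dots,m$ selects, uniformly and without replacement, $lv_1$ positions, then $lv_2$ of the remaining, and so on; hence any fixed colored arrangement with the correct color counts is produced with probability exactly $\left(\frac{(|v|l)!}{\prod_i (v_il)!}\right)^{-1}$, the reciprocal of the number of such arrangements. Multiplying the two stages, the multinomial coefficients cancel and we obtain $\mathbb{P}(\alpha)=\frac{t^{|v|l}}{S_v(t)}=\frac{x^{|\alpha|}}{S_v(x)}$, which is precisely the Boltzmann probability. A small point worth checking: the law of $L$ given in the algorithm is indeed a probability distribution on $\mathbb{N}^*$, since $\sum_{l>0}\frac{(|v|l)!}{\prod_i (v_il)!}t^{|v|l}=S_v(t)$ by the very definition of $S_v$ recalled just above the algorithm; this guarantees the generator terminates almost surely and is well-defined.

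For the complexity claim, I would argue that, conditioned on $L=l$, the output has size $|v|l$, and every elementary operation of the algorithm does $O(1)$ work per position of $M$: the nested loop over colors touches each of the $|v|l$ entries of $M$ exactly once when assigning its color, and the uniform selection of $lv_i$ unaffected positions can be realized by an incremental reservoir/Fisher--Yates-style scan that is linear in the number of remaining slots. Summing over colors, the total is $\sum_i O(lv_i)=O(l|v|)=O(|\alpha|)$ arithmetic operations, as claimed. (Sampling $L$ itself is a one-time cost that, under the usual Boltzmann bit-model conventions of \cite{DFLS}, is absorbed into the output size; I would state this with the same conventions used earlier in the paper rather than belabor it.)

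The only genuinely delicate point — and the step I expect to be the main obstacle to a fully rigorous write-up — is justifying that the ``select uniformly $lv_i$ positions among the not-yet-affected entries'' sub-step really is equivalent to drawing a uniform ordered multinomial arrangement, i.e. that the sequential color-by-color choices compose to the uniform distribution over the $\frac{(|v|l)!}{\prod_i(v_il)!}$ arrangements and not to some other distribution. This is a standard fact about sampling without replacement (the probability of any particular sequence of disjoint subsets of sizes $lv_1,\dots,lv_m$ telescopes to $\prod_i \binom{|v|l-lv_1-\cdots-lv_{i-1}}{lv_i}^{-1} = \frac{\prod_i (v_il)!}{(|v|l)!}$), but since it is the crux of the cancellation I would spell out this telescoping product explicitly. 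Everything else is the routine arithmetic already modeled by the proof of Lemma~2.
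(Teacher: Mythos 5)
Your proof is correct and follows exactly the route the paper intends: the paper's own proof is the one-line remark that the argument "can be easily transposed from the $(1,1)$-balanced one," and your write-up is precisely that transposition, with the multinomial coefficient $\frac{(|v|l)!}{\prod_i (v_il)!}$ cancelling against the uniform choice of arrangement just as $\binom{2l}{l}$ does in Lemma~2. The extra care you take with the telescoping product for sequential sampling without replacement and with the linear-time placement of colors is a welcome (and correct) elaboration of details the paper leaves implicit.
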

\begin{proof}The proof can be easily transposed from $(1,1)$-balanced one. The complexity result is trivial.\end{proof}

%\begin{proof}
%Let $\alpha$ be a $\vec{v}$-balanced sequence.
%$\mathbb{P}(\alpha)= \mathbb{P}(L= \frac{|\alpha|}{\sum\limits_{i=0}{m}})\mathbb{P}(atoms au bon endroit)$\\
%$\mathbb{P}(\alpha)=\frac{(\sum\limits_{i=1}^{m}k_{i}l)!}{\prod\limits_{i=1}^{m}(k_il!)} \frac{t^{(\sum\limits_{i=1}^{m}k_il)}}{S_{\vec{v}}(t)} (\prod\limits_{i=1}^{m}(_{k_i}^{\sum\limits_{j=1}^{m}k_j}))^{-1} = \frac{t^{(\sum\limits_{i=1}^{m}k_il)}}{S_{\vec{v}}(t)} $\\

%\end{proof}

%$Seq(Q_{\vec{v}}(\sum\limits_{i=1}^{m}X_i))=Seq_{\vec{v}}(\sum\limits_{i=1}^{m}X_i)$ avec decomposition unique\\
%$Q_{\vec{v}}(\sum\limits_{i=1}^{m}X_i)$ is an excursion in a vectorial reseau with no return to the origin before the end\\

%\begin{algorithm}[H]
%\caption{$\Gamma Q_{\vec{v}}(t)$}

%\KwIn{the parameter $t$}

%\KwOut{a $\vec{v}$-balanced sequence of with no under $\vec{v}$-balanced sequence.}

%Draw $M$ with $\Gamma Seq_{\vec{v}}(t)$.\\
%Read $M$ until you get a $\vec{v}$-balanced prefix $P$.\\
%\Return $P$

%\end{algorithm}

%\begin{proof}
%$Seq_{\vec{v}}(\sum\limits_{i=1}^{m}X_i) =Seq(Q_{\vec{v}}(\sum\limits_{i=1}^{m}X_i)) $ is generated with a Boltzmann probability, so $Q_{\vec{v}}(\sum\limits_{i=1}^{m}X_i)$ too.
%\end{proof}

\smallskip

Now, as with the example of $(1,1)$-balanced cycles, we are going to use this sampler for $v$-balanced sequences to generate $v$-balanced cycles.

\smallskip

\subsection{An isomorphism for $\Theta Cyc_v$}
\begin{theorem}
 $\Theta Cyc_{v}$ is isomorphic to $\sum\limits_{n>0} \varphi(n) Rep_n(Seq_{v})$.
\end{theorem}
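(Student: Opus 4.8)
The plan is to mirror the proof of Theorem~3 (the $(1,1)$ case), replacing the $(1,1)$-diagonal of a bivariate generating function by the full $m$-variate diagonal that extracts the coefficients along the ray $\mathbb{N}\cdot v$. First I would let $C_{mic}$ denote the generating function of $Cyc(\mathcal{Z}_1+\cdots+\mathcal{Z}_m)$, with no balance constraint, and expand it using the classical cycle formula:
\begin{align*}
C_{mic}&=\sum_{n>0}\frac{-\varphi(n)}{n}\log\!\Big(1-\sum_{i=1}^{m}Z_i^{\,n}\Big)
=\sum_{n>0}\frac{\varphi(n)}{n}\sum_{k>0}\frac{1}{k}\Big(\sum_{i=1}^{m}Z_i^{\,n}\Big)^{k}\\
&=\sum_{n>0}\frac{\varphi(n)}{n}\sum_{k>0}\frac{1}{k}\sum_{p_1+\cdots+p_m=k}\frac{k!}{p_1!\cdots p_m!}\prod_{i=1}^{m}Z_i^{\,p_i n}.
\end{align*}
Then I would define $Cyc_v$ as the $v$-diagonal of $C_{mic}$: the substitution $Z_i\mapsto x^{v_i}$ restricted to the monomials where $(p_1,\dots,p_m)$ is collinear to $v$, i.e. $p_i=v_i p$ for a common $p>0$ (together with the single $n$ from the outer sum forced to divide everything consistently). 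Extracting those terms kills the inner multinomial sum to a single index $p$, and the multinomial coefficient $\frac{k!}{p_1!\cdots p_m!}$ with $k=|v|p$ becomes exactly $\frac{(|v|p)!}{\prod_i (v_i p)!}$, which is the coefficient appearing in $S_v$.

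Carrying this out gives
\begin{align*}
C_{v}(x)&=\sum_{n>0}\frac{\varphi(n)}{n}\sum_{p>0}\frac{1}{|v|p}\cdot\frac{(|v|p)!}{\prod_{i=1}^{m}(v_ip)!}\;x^{\,n|v|p}.
\end{align*}
Applying the pointing operator $C_v^{\bullet}(x)=x\,\frac{d}{dx}C_v(x)$ multiplies the coefficient of $x^{n|v|p}$ by $n|v|p$, cancelling the $\frac{1}{|v|p}$ and the $\frac{1}{n}$ and leaving
\begin{align*}
C_{v}^{\bullet}(x)&=\sum_{n>0}\varphi(n)\sum_{p>0}\frac{(|v|p)!}{\prod_{i=1}^{m}(v_ip)!}\;x^{\,n|v|p}
=\sum_{n>0}\varphi(n)\,S_{v}(x^{n}),
\end{align*}
using the expression for $S_v$ recalled just above Algorithm~6 (with $Z_i$ specialised to $x$, so that $\prod_i Z_i^{v_ip}=x^{|v|p}$). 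Since $S_v(x^n)$ is precisely the generating function of $Rep_n(Seq_v)$ — each object of $Seq_v$ of size $|v|p$ is copied $n$ times to produce an object of size $n|v|p$ — this is the generating function identity underlying the claimed isomorphism $\Theta Cyc_v\simeq\sum_{n>0}\varphi(n)\,Rep_n(Seq_v)$, and I would then promote it to a bijection exactly as in the $(1,1)$ case: a $v$-balanced cycle with a marked atom, written as a periodic word of minimal period $d$ repeated $n=(\text{total length})/d$ times, corresponds to the choice of $n$, one of the $\varphi(n)$ primitive rotation classes compatible with the marking, and the underlying aperiodic block read as an element of $Seq_v$.

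The main obstacle is the combinatorial (rather than generating-function) side: making the periodicity/marking argument fully rigorous. Concretely, one must check that for a cyclic word whose content is $v$-balanced, every rotational period $d$ still yields a $v$-balanced block, that the marked atom together with the period pins down a unique representative in $Rep_n(Seq_v)$, and that the factor $\varphi(n)$ counts exactly the primitive roots of unity worth of rotational ambiguity left after marking — the same Raney/necklace-counting bookkeeping that powers the standard $\Gamma_x Cyc$ sampler. The generating-function computation itself is routine once the diagonal extraction is set up carefully; the delicate point is ensuring the inner sum over compositions $p_1+\cdots+p_m=k$ genuinely collapses to the single collinear term and that no cross terms survive the $v$-diagonal.
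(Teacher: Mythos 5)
Your proposal follows essentially the same route as the paper: expand the generating function of unconstrained $m$-colored cycles via the $\varphi(n)$ cycle formula, extract the $v$-diagonal to collapse the multinomial sum to the single collinear term $\frac{(|v|p)!}{\prod_i (v_ip)!}$, and observe that pointing cancels the $\frac{1}{n|v|p}$ factor to yield $C_v^{\bullet}=\sum_{n>0}\varphi(n)S_v(x^n)$. Your closing remarks on upgrading the generating-function identity to a genuine bijection go slightly beyond what the paper writes in this proof (the paper defers the combinatorial bookkeeping to the correctness proof of Algorithm~7), but the argument is the same.
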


\begin{proof}
Let $C_{m\mathrm{col}}$ be the generating function of cycles of $m$ atoms $\mathcal{Z}_1,..., \mathcal{Z}_m$.
$$ C_{m\mathrm{col}}=\sum\limits_{n>0}\frac{-\varphi(n)}{n}\log(1-(\sum\limits_{i=1}^{m}Z_i^n))=\sum\limits_{n>0}\frac{\varphi(n)}{n} \sum\limits_{k>0}\frac{1}{k} (\sum\limits_{i=1}^{m}Z_i^n)^k$$
$$C_{m\mathrm{col}}=\sum\limits_{n>0}\frac{\varphi(n)}{n} \sum\limits_{k>0}\frac{1}{k} \sum\limits_{\sum\limits_{i=1}^m p_i = k} \frac{k!}{\prod\limits_{i=1}^m (p_i!)}\prod\limits_{i=1}^m Z_i^{np_i} $$

\smallskip

Let $C_{v}$ be the generating function of $Cyc_v$. This is the extraction of terms with the exponents verifying the $v$-balanced condition in $C_{m\mathrm{col}}$.
$$ C_{{v}}=\sum\limits_{l>0}([\prod\limits_{i=1}^mZ_i^{lv_i}] C_{m\mathrm{col}})X^{|v|l}=\sum\limits_{n>0}\frac{\varphi(n)}{n}\sum\limits_{p>0}\frac{1}{|v|p} \frac{(|v|p)!}{\prod\limits_{i=1}^{m}((v_ip)!)}X^{|v|np} $$

We will now apply the same idea that we described for the $(1,1)$-balanced case to $\Theta \mathcal{C}_v$ (the generating function of which is $ C_{{v}}^{\bullet}$).
$$ C_{{v}}^{\bullet}=\sum\limits_{n>0} \varphi(n) \sum\limits_{p>0}\frac{(|v|p)!}{\prod\limits_ {i=1}^{m}((v_ip)!)}X^{|v|np} = \sum\limits_{n>0} \varphi(n) S_{{v}}(X^n) $$
%So, we obtain the useful formula : $$ C_{{v}}^{\bullet}=\sum\limits_{n>0} \varphi(n) S_{{v}}(X^n) $$
%It follows that  $\Theta Cyc_{v}$ is isomorphic to $\sum\limits_{n>0} \varphi(n) Rep_n(Seq_{v})$
\end{proof}

This isomorphism can be used to obtain the following sampler:
\bigskip

\begin{algorithm}[H]
\caption{$\Gamma_x \Theta Cyc_v$}

\KwIn{the parameter $x$}

\KwOut{a $v$-balanced cycle of $\mathcal{Z}_i$.}

$n:=1$\\
$S:=\frac{\varphi(n) S_{v}(x^{n}) } {C_{v}^{\bullet}(x)}$\\
Draw a real number $u$ uniformly in $[0,1]$\\
\While{$u>S$}{
$n:=n+1$\\
$S:=S+\frac{\varphi(n) S_{v}(x^{n}) } {C_{v}^{\bullet}(x)}$
}
\Return $[\Gamma_x (Rep_n(Seq_{(1,1)}))]$\\
%$M\leftarrow Concat(M,])$\\
%\Return $[\delta_kM]$.
\end{algorithm}

%\begin{algorithm}[H]
%\caption{$\Gamma Cyc_{\vec{v}}^{\bullet}(t)$}
%
%\KwIn{the parameter $t$}
%
%\KwOut{a $\vec{v}$-balanced pointed cycle}
%
%n:=1\\
%Tant que l'on a encore rien retourner faire\\
%si(Bernouilli( $\ds \frac{\varphi(n) \Delta_n(S_{\vec{v}}(\sum\limits_i^{m}X_i))} {C_{\vec{v}}^{\bullet}(\sum\limits_{i=0}^{m}X_i)}$) = 1)\\
%\Return $[\delta_n(\Gamma Seq_{\vec{v}}(t))]$\\
%sinon $n=n+1$

%$M\leftarrow Concat(M,])$\\
%\Return $[\delta_kM]$.
%\end{algorithm}
\begin{proposition}
 Algorithm 7 is a valid Boltzmann sampler for $\Theta Cyc_v$. Its arithmetic complexity is linear in the size of its output object.
\end{proposition}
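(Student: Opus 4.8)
The plan is to establish correctness and complexity of Algorithm 7 essentially in parallel with the $(1,1)$-balanced case, invoking the isomorphism of Theorem 4 ($\Theta Cyc_v \simeq \sum_{n>0}\varphi(n)\,Rep_n(Seq_v)$) as the structural backbone. First I would verify that the \texttt{While}-loop is just the classical Boltzmann sampler for a sum of infinitely many classes specialised to the decomposition $\Theta Cyc_v = \sum_{n>0}\varphi(n)\,Rep_n(Seq_v)$: the quantity $S$ accumulates the partial sums of $\frac{\varphi(n) S_v(x^n)}{C_v^{\bullet}(x)}$, and since by Theorem 4 these terms sum to $1$, the variable $n$ is drawn exactly with probability $\mathbb{P}(n)=\frac{\varphi(n) S_v(x^n)}{C_v^{\bullet}(x)}$, which is the Boltzmann branching probability for choosing the $\varphi(n)$ copies of $Rep_n(Seq_v)$. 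One subtlety to address is that the $\varphi(n)$ identical summands $Rep_n(Seq_v)$ are lumped together: drawing $n$ and then returning $\Gamma_x(Rep_n(Seq_v))$ without choosing which of the $\varphi(n)$ copies we landed in is legitimate precisely because the copies are isomorphic, so the induced distribution on the underlying cycle is unchanged (the tagged-atom/copy data is exactly the point we are instructed to forget at the level of $Cyc_v$; here we keep the cycle object itself).

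Next I would check that, conditioned on $n$, the call $\Gamma_x(Rep_n(Seq_v))$ is a correct Boltzmann sampler for $Rep_n(Seq_v)$ at parameter $x$. By the $Rep_n$ rule in Figure 1, this amounts to running $\Gamma_{x^n}(Seq_v)$ and repeating its output $n$ times; correctness of $\Gamma_{x^n}(Seq_v)$ is exactly Lemma 5 (Algorithm 6) with parameter $x^n$ in place of $x$. Combining the branching probability for $n$ with the conditional Boltzmann probability $\frac{x^{|w|n}}{S_v(x^n)}$ of an object $w^n$ of $Rep_n(Seq_v)$ (where $|w|$ is divisible by $|v|$), the product telescopes to $\frac{\varphi(n) x^{|w|n}}{C_v^{\bullet}(x)}$ summed over the $\varphi(n)$ copies — i.e. to $\frac{x^{N}}{C_v^{\bullet}(x)}$ with $N$ the size of the output — which is the Boltzmann distribution for $\Theta Cyc_v$. (Here one should note the harmless typo in Algorithm 7: the last line should read $\Gamma_x(Rep_n(Seq_v))$, not $Seq_{(1,1)}$.)

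For the complexity claim, I would argue in two parts. The cost of one call to $\Gamma_x(Rep_n(Seq_v))$ is, by Lemma 5, linear in the size $|w|$ of the drawn sequence, plus the $n$-fold copying, giving $O(n|w|) = O(N)$, linear in the output size. The remaining point is that the \texttt{While}-loop itself runs in expected time proportional to (at most) the final value of $n$, hence is dominated by the size of the output since $N \ge n$ whenever the drawn sequence is nonempty; more carefully one bounds $\mathbb{E}(n)$ and $\mathbb{E}(N)$ and checks $\mathbb{E}(\text{loop cost}) = O(\mathbb{E}(N))$. The main obstacle I anticipate is precisely this last amortisation: making rigorous that the search for $n$ — which involves evaluating $S_v(x^n)$, itself an infinite series — does not asymptotically dominate, and handling the arithmetic-complexity model (exact vs. approximate evaluation of $S_v$ and of $\varphi$). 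In the arithmetic (unit-cost) model stated in the proposition, each update of $S$ is $O(1)$ and the whole analysis reduces to showing $\mathbb{E}(n) = O(\mathbb{E}(N))$, which follows from $n \le N$ on the event that $Seq_v$ draws a nonempty word together with a separate (geometric-tail) bound on the probability of drawing the empty word repeatedly; I would spell out that estimate and leave the exact-evaluation refinements as a remark.
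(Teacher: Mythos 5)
Your proposal correctly handles the branching step of the \texttt{While}-loop and the reduction of $\Gamma_x(Rep_n(Seq_v))$ to $\Gamma_{x^n}(Seq_v)$, and your complexity discussion is actually more careful than the paper's (which simply asserts that ``the complexity ensues from the results on Boltzmann sampling''). But there is a genuine gap at the heart of the correctness argument. Your computation establishes that the pair $(n, w)$ (branch index, drawn sequence) is produced with probability $\varphi(n)\,x^{n|w|}/C_v^{\bullet}(x)$, i.e.\ that the algorithm is a Boltzmann sampler for the class $\sum_{n>0}\varphi(n)\,Rep_n(Seq_v)$ of (copy, repeated sequence) pairs. The proposition, however, is about the distribution on the \emph{cycles} that the algorithm actually returns, and the map from repeated sequences to cycles is many-to-one with a multiplicity that depends on the periodicity of the cycle: a cycle $c=u^p$ with primitive root $u$ of length $s$ (so $|c|=ps$) arises from the branch $n=d$ for \emph{every} divisor $d$ of $p$, and for each such $d$ from each of the $s$ cyclic shifts of $u^{p/d}$. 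You cannot discharge this by ``invoking the isomorphism of Theorem 4'': that theorem is proved in the paper only as an identity of generating functions, so it guarantees equinumerosity but supplies no explicit correspondence under which ``close the sequence into a cycle'' is automatically measure-preserving. The paper's proof is precisely the missing bookkeeping: it sums the probabilities over all divisors $d\mid p$ and all $s$ shifts, uses $\sum_{d\mid p}\varphi(d)=p$ to get $\mathbb{P}(c)=|c|\,x^{|c|}/C_v^{\bullet}(x)$, and then identifies the $|c|$ preimages with the $|c|$ pointed versions of $c$ to conclude $\mathbb{P}(c,i)=x^{|c|}/C_v^{\bullet}(x)$.

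A symptom of the gap is that the factor $|c|$ never appears in your argument: under the Boltzmann model for $\Theta Cyc_v$ an (unpointed) cycle $c$ must be output with probability $|c|\,x^{|c|}/C_v^{\bullet}(x)$, since it has $|c|$ pointed versions each of weight $x^{|c|}$, whereas your telescoping lands on $x^{N}/C_v^{\bullet}(x)$ for ``the output'' without saying whether the output is a cycle or a pointed cycle. Your parenthetical about the $\varphi(n)$ isomorphic copies being ``lumped together'' addresses only the harmless identification within the sum $\sum_n \varphi(n)Rep_n(Seq_v)$; it does not address the collapse from sequences to cycles, which is where the Euler-totient identity is actually needed. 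To repair the proof you should insert the divisor-sum computation explicitly, essentially as the paper does.
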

\begin{proof}
Let us consider the generation of a pointing cycle $c$.\\
It can be written as $c=u^p$, where $u$ a primitive sequence (\textit{i.e.} without replication) and $p$ is the primitive repetition order of $c$ ($|c|=p|u|$). There are $s=|u|$ shifts of $u$ which produce equivalent cycles.\\
In Algorithm 7, the generating sequence is not necessarily primitive. So, $c$ can be drawn as any $\tilde{u}^d$, with $d|p$ and $\tilde{u}$ a shift of $\frac{p}{d}$ repetitions of $u$. 

So, the probability to draw $u$ is the sum for all $d|p$ of the probability to draw $d$ as repetition order and to draw one of the $s$ shift of $\tilde{u}$ as motif :
 
%$c=u^p$ where $u$ is primitive, $|u|=s$. $l=us$

%$\mathbb{P}(c)=\sum\limits_{d|p}\mathbb{P}(draw d as repetition order)\mathbb{P}(atoms au bon endroit) $\\
$$\mathbb{P}(c)= \sum\limits_{d|p}\frac{\varphi(d) Rep_d(S_{{v}})}{C_{{v}}^{\bullet}}.\frac{s.(x^{d})^{\frac{p}{d}}}{Rep_d(S_{{v}})}=s \frac{x^{|c|}}{C_{{v}}^{\bullet}}\sum\limits_{d|p}\varphi(d) = |c| \frac{x^{|c|}}{C_{{v}}^{\bullet}}.$$
$\mathbb{P}(c,i)= \frac{1}{|c|}\mathbb{P}(c) = \frac{x^{|c|}}{C_{{v}}^{\bullet}}$ with $i \in [|1,l|],$ the choice of $i$ corresponding to the pointed atom.
The complexity ensues from the results on Boltzmann sampling.
\end{proof}

To obtain a $v$-balanced cycle from an object of $\Theta Cyc_v$, we can now apply the general algorithm 5 to $\Theta Cyc_v$. As proven previously, this provides us a Boltzmann sampler for $Cyc_v$.

\section{Conclusion}
The random generation of constrained-colored structures is in general very difficult. In a previous paper \cite{BJ}, we already investigated the generation of the $k$-colored structures and size-colored structures. In this short paper, we have presented a way to efficiently generate $v$-balanced cycles. It is possible with our samplers to generate $v$-balanced cycles of sizes reaching up to one million. Nevertheless our methods can not be directly generalized to other balanced structures. For instance, we do not know how to generate $(1,1)$-balanced general non planar (unlabelled) trees where general non planar trees can be specified as $\mathcal{T}=\mathcal{Z}.MSet(\mathcal{T})$. 
This problem is a work in progress and should be solved by a method involving multivariate Boltzmann sampler. 
Another perspective is the generation of semi-labelled structures. In these structures each atom can take a color in $\{1,...,k\}$ but, if we have an atom of color $k>1$, we need to also have at least one atom of color $k-1$. Semi-labelling is a new interesting labelling which is in a sense between unlabelling and labelling. But very little is as of yet known about it.

Finally, we want to thank, B. Salvy for his precious knowledge on hypergeometric functions and Joachim Dehais, J\'er\'emie Lumbroso and Yann Ponty for their careful reading of a preliminary version of the manuscript.
%\newpage

\end{document}